\newcommand{\old}{\operatorname{old}}
\def\myalgfont{\small}
\algrenewcommand\textproc{\myalgfont}
\algrenewcommand\algorithmicprocedure{{\myalgfont \bf procedure}}
\newcommand{\VaR}{\operatorname{VaR}}
\newcommand{\tr}{^\intercal}
\newcommand{\xirnd}{\xi \hspace{-.440em} \xi}
\newcommand\R{{\sf I\kern-0.1em R}}
\newcommand{\st}{{\rm s.t.}}
 \newcommand\one{e}
 \newcommand{\X}{\mathcal{X}}
 \newcommand{\bmin}{\underline{b}}
 \newcommand{\argmin}{\operatorname{argmin}}
\newtheorem{theorem}{Theorem}
\newtheorem{lemma}{Lemma}
\newtheorem{proposition}{Proposition}
\newtheorem{corollary}{Corollary}
\newtheorem{example}{Example}
\newtheorem{remark}{Remark}
\tikzstyle{startstop} = [rectangle, rounded corners, minimum width=3cm, minimum height=1cm,text centered, text width=3cm, draw=black, fill=red!30]
\tikzstyle{io} = [trapezium, trapezium left angle=70, trapezium right angle=110, minimum width=3cm, minimum height=1cm, text centered, text width=3cm, draw=black, fill=blue!30]
\tikzstyle{process} = [rectangle, minimum width=3cm, minimum height=1cm, text centered, text width=4cm, draw=black, fill=orange!30]
\tikzstyle{process2} = [rectangle, minimum width=3cm, minimum height=1cm, text centered, text width=3cm, draw=black, fill=orange!30]
\tikzstyle{decision} = [diamond, minimum width=3cm, minimum height=1cm, text centered, text width=3cm, draw=black, fill=green!30]
\tikzstyle{arrow} = [thick,->,>=stealth]
\begin{document}

%\begin{frontmatter}

\title{Computing near-optimal Value-at-Risk portfolios using Integer Programming techniques}

\author[1]{Onur Babat \thanks{\tt onur.babat@lehigh.edu}}
\author[2]{Juan ~C. Vera\thanks{\tt j.c.veralizcano@uvt.nl}}
\author[3]{Luis ~F. Zuluaga\thanks{\tt luis.zuluaga@lehigh.edu}}

\affil[1]{Department of Industrial and Systems Engineering, Lehigh University, Bethlehem,
PA USA}
\affil[2]{Department of Econometrics and Operations Research, Tilburg University, Tilburg,
 The Netherlands}
\affil[3]{Department of Industrial and Systems Engineering, Lehigh University, Bethlehem,
PA USA}

\date{}

\maketitle

\begin{abstract}
Value-at-Risk (VaR)
is
one of the main regulatory tools used  for risk
management purposes.
However,
it is difficult to compute optimal VaR portfolios; that is,  an optimal risk-reward portfolio
allocation
using VaR as the risk measure. This is due to
VaR being non-convex and
of combinatorial nature.
In particular, it is well-known that the VaR portfolio problem can be formulated
as a mixed-integer linear program (MILP) that is difficult to solve with current MILP solvers for
medium to large-scale instances of the problem.
 Here, we present an algorithm to compute
near-optimal VaR portfolios that takes advantage of this MILP formulation and
provides a guarantee of the solution's near-optimality. As a byproduct, we obtain an algorithm
to compute tight lower bounds on the VaR portfolio problem that outperform related
algorithms proposed in the literature for this purpose. The near-optimality guarantee provided
by the proposed algorithm is obtained
thanks to the relation
between minimum risk portfolios satisfying a reward benchmark and the
corresponding maximum reward portfolios satisfying a risk benchmark. These
alternate formulations of the portfolio allocation problem have been frequently studied in
the case of
convex risk measures and concave
reward functions. Here, this relationship is considered for general risk measures and reward functions.
To illustrate the efficiency of the presented algorithm, numerical results are
presented using historical asset returns from the US financial market.
\end{abstract}

{\bf Keywords:} Value-at-Risk; Portfolio Allocation; Integer Programming Relaxations; Minimizing Risk vs Maximizing Reward equivalence
%\MSC[2010] 90C11 \sep  90C15 \sep 91G10 \sep 91G60
%\end{keyword}

%\end{frontmatter}

\section{Introduction}
In the context of portfolio risk and asset liability management, Value-at-Risk (VaR)
measures the exposure of a portfolio to high losses.
VaR is prominent in current regulatory frameworks for banks (see, e.g., the Basel II and Basel III Accords),
as well as for insurance companies (see, e.g., the Solvency II Directive).
Thus, VaR is an important  and popular tool for risk management in the modern financial
and risk management literature \citep[see, e.g.,][]{VaRbook,Wozabal12}.
Accordingly, the development of risk management methods based on
VaR has been the focus of extensive research work  \citep[see, e.g.,][]{Coleman, Basak, Darbha, Alexei, Koll, Ghaoui, Glass,Wozabal10, Benati07, Gnei11}. %Kaura05,

Although VaR is widely used to measure the risk of a given portfolio
of assets, it is not commonly used as a risk measure in the context
of computing optimal VaR portfolios; that is,  an optimal risk-reward portfolio
allocation
using VaR as the risk measure. Instead, other risk measures
such as the portfolio return's Variance  \citep[cf.,][]{Markowitz},
and the portfolio loss' Conditional Value-at-Risk (CVaR) \citep[cf.,][]{Uryasev00} are more commonly used. This is because, in contrast with the above
mentioned risk measures, VaR is non-convex and
of combinatorial nature \citep[cf.,][]{Alexei}. As a result, the VaR portfolio problem is inherently
difficult solve~\citep[see, e.g.,][]{SimPN09}.

VaR does not (in general) satisfies the commonly accepted axioms
of {\em coherent} risk measures~\citep[cf.,][]{CoherentII,Coherent}).
On the other hand, VaR satisfies the so-called {\em natural risk statistic} axioms \citep{nocoherent}.
More importantly, it has been recently shown in \citet{Gneiting11} that VaR
is an {\em elicitable} risk measure~\citep[cf.,][]{Bellini13}. Loosely speaking,
elicitability is related to how
accurately a risk measure can be forecasted. More precisely, as discussed
in \citet{Bellini13}, it has
been shown that while
CVaR is generally considered a better risk measure from a mathematical point of view, it
requires a higher number of samples for an accurate estimation \citep[see][]{Dani2011} and it is
less robust than VaR \citep[see][]{Cont10}.

Because of the computational difficulties of optimally solving general instances
of the VaR portfolio problem, different heuristics have been proposed
in the literature. In particular, consider the
work of  \citet{Alexei, LarsMU02, Verma, Thiele14}. Also, given that the VaR portfolio
problem belongs to the general class of {\em chance constraint} optimization
problems \citep[cf.,][]{CampC05}, other approximation approaches that can be
used are based on relaxations of the VaR quantile constraint
for which probabilistic guarantees can be obtained.
In particular, consider the work of \citet{CampC05, Farias04, Erdo06}.

When the standard {\em sampling approach} \citep[cf.,][]{Uryasev00} is used to model the
uncertain asset returns, it is well known \citep[see, e.g.,][]{Benati07, FengWS15} that the (resulting) VaR portfolio problem
can be solved to optimality by formulating the problem as a
mixed-integer linear program (MILP).
However, this formulation is difficult to solve with current MILP solvers for
instances with medium to large
number of assets in the portfolio \citep[see, e.g.,][]{Benati07}. Recently, improvements
in the solution of this MILP formulation have been obtained in
\citet{FengWS15}, by tailoring special {\em branch-and-cut} techniques
to solve the problem, as well as improving the {\em big-M} values used
on its MILP formulation. Although these improvements allow for the
solution of VaR portfolio problem instances where thousands of scenarios are used
to model the uncertain asset returns, the number of assets that are considered
in the portfolio is of the order of 25 assets, similar to \citet{Benati07}. Moreover, their solution approach is useful
only when the common total wealth constraint is not considered \citep[][Sec. 5]{FengWS15}.

We present an algorithm to compute
near-optimal VaR portfolios that takes advantage of the VaR portfolio problem MILP formulation and
provides a guarantee of the near-optimality
of the solution. The algorithm makes a straight-forward use of current state-of-the-art
MILP solvers (e.g., {\tt CPLEX} and {\tt Gurobi}). Furthermore,
this algorithm can be used to obtain guaranteed near-optimal solutions
for instances of the VaR problem with up to a hundred assets and thousands of samples to model the
uncertain asset returns. In particular, this allows the use of VaR for strategic asset allocation instead of only tactical asset allocation (e.g., by industry sectors). As a byproduct, we obtain an algorithm
to compute tight lower bounds on the VaR portfolio problem that outperforms the  algorithms for this purpose
recently
proposed by~\cite{LarsMU02}. These algorithms aim at approximating the optimal solution of the VaR portfolio problem by iteratively constructing appropriate instances of the Conditional Value-at-Risk portfolio problem.

The main contribution of the article in relation to the current
VaR portfolio allocation literature is to provide a performance-guaranteed heuristic
solution approach for the problem which can be used to address the solution of medium to large-scale
instances of the problem.
The near-optimal guarantee provided by the proposed algorithm is based on the relation between
two alternate formulations of the portfolio problem; that is,
between minimum risk portfolios satisfying a reward benchmark and the
corresponding maximum reward portfolios satisfying a risk benchmark.
It is well-known that these alternate formulations of the portfolio
problem are equivalent for the mean-variance portfolio model of \citet{Markowitz}.
Recently, \citet[][Thm. 3]{Krok02} have shown
that this equivalence holds for general convex risk measures and concave
reward functions. We also study the relationship between the
alternate  risk-reward and reward-risk formulations of the portfolio problem
for more general risk measures and reward functions. Besides providing
the foundation for the proposed algorithm
to find near-optimal solutions for the
VaR portfolio problem, these results extend the characterization provided by \citet[][Thm. 3]{Krok02}, and rectify some incorrect statements
made in \citet{Lin09} about alternate formulations of the
VaR portfolio problem.

The rest of the article is organized as follows. In Section~\ref{MILPformulation},
the MILP formulation of the VaR portfolio problem is presented. In Section~\ref{S:Dual},
the relationship between the alternate formulations of the portfolio
problem is studied for general risk measures and reward functions. These results
are used in Section~\ref{algorithm}
to develop the proposed algorithm to find near-optimal solutions for the
VaR portfolio problem. In Section~\ref{numerical},
we illustrate the efficiency of the proposed algorithm
by presenting numerical results on instances of the VaR portfolio
problem constructed using historical asset returns from the US financial market.

\section{The MILP formulation of the VaR portfolio problem}
\label{MILPformulation}
The Value-at-Risk (VaR) of a portfolio measures its exposure to high losses.
Specifically,
for a given $\alpha \in (0,1)$ (typically $0.01 \le \alpha \le 0.05$), the
VaR of a portfolio is defined as the $1-\alpha$ quantile of the
portfolio's losses \citep[cf.,][]{Uryasev00}; or equivalently as the $\alpha$ quantile of the
portfolio's returns. Here,
we follow the latter definition~\citep[as in][]{Alexei}.

We begin by formally
stating the VaR portfolio (allocation) problem; which aims at minimizing the
exposure of the portfolio to high losses while maintaining a minimum expected
level of profit.
Consider $n$ risky assets that can be chosen by an investor in the
financial market. Let $\xirnd = {(\xirnd_1,\dots,\xirnd_n)}\tr$ be a random variable in $\R^n$ representing the uncertain
returns of the $n$ risky assets from the current time $t=0$ to a fixed future
time $t=T$. Let $x={(x_1,\dots,x_n)}\tr \in \R^n_+$ denote a portfolio on these assets;
that is, the percentage of the
available funds to be allocated in each of the $n$ risky assets.
Following~\citet{Alexei}, given $\alpha \in (0,1)$, the $\alpha$-level VaR of the portfolio is defined as follows:
\begin{equation}
\label{vardef}
\VaR_{\alpha}(x{\tr}\xirnd) = -\mathbb{Q}_{\alpha}(x{\tr}\xirnd),
\end{equation}
where $\mathbb{Q}_{\alpha}(x{\tr}\xirnd)$ is the $\alpha$ quantile of the portfolio's return
distribution;
that is, $\mathbb{Q}_{\alpha}(x{\tr}\xirnd)= \inf\{v: \mathbb{P}{\rm r}(x{\tr}\xirnd \le v) > \alpha\}$. Also, the
expected portfolio return from $t=0$ to $t=T$ is given by~$\mathbb{E}(x{\tr}\xirnd)$.
Above, $\mathbb{P}{\rm r}(\cdot)$ and $\mathbb{E}(\cdot)$ respectively indicate
probability and expectation.

A (single-period) VaR
portfolio  problem aims at finding the portfolio~$x\in \R^n_+$
to be constructed at $t=0$,
in order to minimize the portfolio's $\VaR_{\alpha}$, subject to the portfolio having a given minimum expected return
$\mu_0$, and possibly satisfying some linear diversification constraints. Formally,
the VaR portfolio problem is:
\begin{equation}
\label{nominal}
\begin{array}{lll}
\min & -\VaR_{\alpha}(x{\tr}\xirnd)  \\[2ex]
\st  & \mathbb{E}(x{\tr}\xirnd) \ge \mu_0\\
     & x{\tr}\one = 1\\
     & x \in \mathcal{X} \subseteq \R^n_+,\\
     \end{array}
\end{equation}
where $\one \in \R^n$ is the vector of all-ones, $\mu_0 \in \R$ is the given target minimum expected portfolio return, and $\mathcal{X} \subseteq \R^n$ is a given set
defined by linear constraints; which are typically used to enforce certain
diversification constraints on the portfolio $x \in \R^n_+$. For the moment, it is assumed that
no short positions are allowed in the portfolio; which
is the most common situation in practice \citep[cf.,][]{Michaud}.

As discussed in \citet{Alexei}, there are two main approaches to solve~\eqref{nominal}: the
{\em parametric} approach, in which it is assumed that the asset returns are governed by a known
distribution (see, e.g., \citet{Lobo}, where asset returns are assumed to be normally distributed);
and the {\em sampling} approach, which uses a finite number of samples $\xi^1, \dots, \xi^m \in \R^n$ of
the asset returns (see, e.g., \citet{Alexei}), that are typically obtained from historical data,
simulations, or a
combination of both. The latter approach
is used in the well-known Conditional Value-at-Risk (CVaR) portfolio model \citep[cf.,][]{Uryasev00}. Here, we adopt the
sampling approach, which following \citet[Section 2.1]{Alexei} leads to the VaR portfolio
problem~\eqref{nominal} being written as:
\begin{equation}
\label{nominalsampled}
\begin{array}{llll}
z_{\VaR}: = &\min & -\nu \\[2ex]
&\st  & \nu = \min^{\lfloor \alpha m \rfloor+1}\{x{\tr}\xi^1,\dots,x{\tr}\xi^m\}\\
&     & x{\tr}\mu \ge \mu_0\\
&     & x{\tr}\one = 1\\
&     & x \in \mathcal{X} \subseteq \R^n_+, \nu \in \R,\\
     \end{array}
\end{equation}
where $\nu$ represents the $\VaR_{\alpha}(x\tr \xirnd)$,  the
vector of mean return estimates is, for simplicity, considered to be given by $\mu := (1/m)\sum_{j=1}^m \xi^j$. However, our results
are independent of this choice, and a variety of other estimation methods can be used
\citep[see, e.g.,][]{Black, Meuc07}. Also, for $k\in\{1,\dots,m\}$, and $u^j\in \R$, $j=1,\dots,m$, the
$k$-th smallest element in $\{u^1,\dots,u^m\}$ is denoted by $\min^{k}\{u^1,\dots,u^m\}$
(i.e., $\min^{k}\{u^1,\dots,u^m\}$ is the $k$-th order statistic $u^{(k)}$ in $\{u^1,\dots,u^m\}$).

Problem~\eqref{nominalsampled} is equivalent \citep[see, e.g.,][]{Benati07, FengWS15} to the
following mixed-integer linear program (MILP):
\begin{equation}
\label{eq:binaryprog}
\begin{array}{llll}
z_{\text{VaR}}= & \max & \nu \\[2ex]
&\st  & \displaystyle \sum_{j=1}^m y_j = \lfloor \alpha m \rfloor\\
&     &  My_j \ge \nu - x{\tr}\xi^j, & j=1,\dots,m\\
&     & x{\tr}\mu \ge \mu_0\\
&     & x{\tr}\one = 1\\
&     & x \in \mathcal{X} \subseteq \R^n_+, \nu \in \R\\
&     & y_j \in \{0,1\}, & j=1,\dots,m,\\
     \end{array}
     \end{equation}
where $M$ is a big enough constant (i.e., $M > 2\max\{|\xi^j_i|: i \in \{1,\dots,n\}, j \in \{1,\dots,m\}\}$), and as in~\eqref{nominalsampled},
$\nu$ represents the VaR of the portfolio~$x \in \R^n_+$. The extra binary variable $y_j$ denotes whether
$\nu$ is to the right ($y_j=1$) or to the left ($y_j=0$) of the sample portfolio
return $x{\tr} \xi^j$, for $j=1,\dots,m$.

In the literature, it is common to consider two alternate formulations of
the portfolio allocation problem. That is, besides the portfolio allocation formulation
in which one seeks to obtain the minimum risk portfolio satisfying a reward
benchmark (as in eq. \eqref{nominal} above), the alternate formulation in which
one seeks to obtain the maximum reward portfolio satisfying a risk benchmark
is commonly considered. It is well-known that these alternate formulations of the portfolio
problem are equivalent for the classical mean-variance portfolio model of \citet{Markowitz}
\citep[see, e.g.,][]{Krok02}. Due to the non-convexity of the VaR risk measure,
it is not surprising that this equivalence does not hold in general for the VaR portfolio problem
considered here. However, the relationship between these two alternate formulations
of the VaR portfolio problem is fundamental to develop the algorithm
presented here to address the solution of this problem. Below, we formally
present the alternate maximum reward portfolio satisfying a minimum VaR risk
benchmark $\tilde{\nu} \in \R$.

\begin{equation}
\label{nominalmax}
\begin{array}{lll}
\max & \mathbb{E}(x{\tr}\xirnd)  \\[2ex]
\st  & -\VaR_{\alpha}(x{\tr}\xirnd) \leq \tilde{\nu}\\
     & x{\tr}\one = 1\\
     & x \in \mathcal{X} \subseteq \R^n_+.\\
     \end{array}.
\end{equation}
Using the sampling approach, and similar to \eqref{nominal}, problem~\eqref{nominalmax} can be reformulated as the following MILP:
\begin{equation}
\label{maxsamp}
\begin{array}{llll}
z^*_{\text{VaR}} = & \max & x{\tr}\mu  \\[2ex]
&\st  & \displaystyle \sum_{j \in I} y_j \leq \lfloor \alpha m \rfloor\\
&     & My_j \ge \tilde{\nu} - x{\tr}\xi^j, & j=1,\ldots,m\\
&     & x{\tr}\one = 1\\
 &    & x \in \mathcal{X} \subseteq \R^n_+, v \in \R\\
 &    & y_j \in \{0,1\}, & j=1,\ldots,m.\\
\end{array}
\end{equation}

The relationship between the two alternative formulations
of the VaR portfolio problems~\eqref{eq:binaryprog} and~\eqref{maxsamp} will be analyzed in
the next section. Moreover, in Section~\ref{algorithm}, this relationship is exploited to obtain
approximate solutions of the VaR portfolio problem \eqref{eq:binaryprog} with a near-optimallity guarantee.

\section{On alternate portfolio allocation problem  formulations}
\label{S:Dual}
In portfolio allocation problems one seeks to find
the
portfolio with minimum risk subject to a constraint on the minimum level of
the portfolio's reward. Alternatively, the portfolio allocation problem is also formulated
as the problem of obtaining the portfolio with maximum reward subject to a constraint on the maximum level
of the portfolio's risk. Similar to \cite{Krok02}, these two problems
can be formally and respectively stated as follows:
\begin{equation}
\label{eq:riskopt}
\begin{array}{llll}
\beta(a) = & \min & \phi(x) \\
                               & \st     & R(x) \ge a\\
                               &          & x \in \X,\\
\end{array}
\end{equation}

\begin{equation}
\label{eq:retopt}
\begin{array}{llll}
\alpha(b) = & \max & R(x) \\
                               & \st     & \phi(x) \le b\\
                               &          & x \in \X,\\
\end{array}
\end{equation}
where $x \in \R^n$ represents the portfolio of assets, $\phi(x): \R^n \to \R$ measures the portfolio's risk,
$R(x): \R^n \to \R$ measures the portfolio's reward, and $\X \in \R^n$ represents the set of admissible
portfolios (e.g., $\X$ could contain long only positions constraints or benchmark constraints). Also,
$a, b, \in \R$, respectively represent the minimum required reward, and the maximum allowed risk. Throughout, we assume that the set $\X \in \R^n$ is compact (any position on an asset is typically constrained to be within certain lower and upper bounds), and
use the usual convention $\beta(a) = +\infty$ (resp. $\alpha(b) = -\infty$) if problem \eqref{eq:riskopt} (resp. \eqref{eq:retopt}) is
infeasible.

For the classical mean-variance portfolio allocation
model introduced by \citet{Markowitz}, it is well known that there is a one-to-one
correspondence between the optimal portfolios obtained from these two
models (i.e., \eqref{eq:riskopt} and \eqref{eq:retopt}). In more generality, it has been recently shown by \citet[][Thm. 3]{Krok02}
that this type of one-to-one relationship will hold  in more generality whenever the risk measure $\phi(x)$ is
convex and the reward measure $R(x)$ is concave.

Not surprisingly, when the risk measure $\phi(x)$ is defined by the portfolio's VaR, there is not a one-to-one
correspondence between the portfolio allocation models \eqref{eq:riskopt} and  \eqref{eq:retopt}.
However, as it will be illustrated therein, when using VaR as a risk measure, relaxations of both these problems
are useful in addressing the solution of \eqref{eq:riskopt}. Given this, and the fact that is has been erroneously
reported in \cite{Lin09} that there is a one-to-one correspondence between the portfolio allocation problems
\eqref{eq:binaryprog} and  \eqref{maxsamp}, it is worth to study below the relationship between these two
problems in a general setting when the risk measure $\phi(x)$ is not necessarily convex and/or the measure of reward $\R(x)$ is
not necessarily concave; extending \citet[][Thm. 3]{Krok02} to provide both sufficient and
necessary conditions for both~\eqref{eq:riskopt} and  \eqref{eq:retopt} to have a one-to-one correspondence.
These results are formally stated in the remainder of this section.

We define (recall that by assumption $\X \subseteq \R^n$ is compact) the minimum risk and maximum reward that any portfolio in the admissible set $\X \subseteq \R^n$ can have as:
\newcommand{\amax}{\bar a}

\begin{equation}
\begin{array}{llll}
\bmin = & \min & \phi(x) \\
                               & \st    & x \in \X\\
\end{array} \text { and }
\begin{array}{llll}
\amax = & \max & R(x) \\
                               & \st    & x \in \X.\\
\end{array}
\end{equation}
Theorem \ref{th:duality} below, provides sufficient and necessary conditions for a one-to-one correspondence between
the portfolio allocation problems \eqref{eq:riskopt} and  \eqref{eq:retopt}.

\begin{theorem}\label{th:duality} Let $I \subseteq [\alpha(\bmin),\amax]$ be an interval.
The relation  $a = \alpha(\beta(a))$ holds for any $a \in I$  if and only if $\beta(a)$ is
strictly increasing for all $a \in I$.
\end{theorem}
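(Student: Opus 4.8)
The plan is to prove the two implications separately, after recording two elementary facts. First, I would observe that $\beta$ and $\alpha$ are both non-decreasing --- increasing $a$ shrinks the feasible region of \eqref{eq:riskopt}, and increasing $b$ enlarges that of \eqref{eq:retopt}. Second, using compactness of $\X$ (so the relevant extrema are attained), for every $a\in[\alpha(\bmin),\amax]$ problem \eqref{eq:riskopt} is feasible and $\beta(a)$ is attained at some $x_a\in\X$ with $R(x_a)\ge a$, $\phi(x_a)=\beta(a)$; since $x_a$ is then feasible for \eqref{eq:retopt} at $b=\beta(a)$, the ``weak-duality'' bound $\alpha(\beta(a))\ge R(x_a)\ge a$ holds automatically on $[\alpha(\bmin),\amax]$. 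Thus $a=\alpha(\beta(a))$ can fail only in the form $\alpha(\beta(a))>a$.

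The core of the argument is a pointwise equivalence that I would isolate first: for $a\in[\alpha(\bmin),\amax]$, $\alpha(\beta(a))=a$ holds iff $\beta(t)>\beta(a)$ for every $t\in(a,\amax]$. The forward direction comes from noting that if $\alpha(\beta(a))=a'>a$ then a maximizer $x'$ of \eqref{eq:retopt} at $b=\beta(a)$ satisfies $\phi(x')\le\beta(a)$ and $R(x')=a'$, hence $x'$ is feasible for \eqref{eq:riskopt} at each $t\in(a,a']$ and gives $\beta(t)\le\phi(x')\le\beta(a)$; together with monotonicity this pins $\beta$ to be constant on $[a,a']$. The converse is the same computation run backwards: a plateau of $\beta$ on a nondegenerate $[a,a']$ yields a point feasible for \eqref{eq:retopt} at $b=\beta(a)$ with reward $\ge a'$. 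Informally, the sole obstruction to $a=\alpha(\beta(a))$ is a flat stretch of $\beta$ immediately to the right of $a$.

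Finally I would lift this to $I$. For ``only if,'' assuming the relation on all of $I$ and taking $a_1<a_2$ in $I$, the equality $\beta(a_1)=\beta(a_2)$ would make $\beta$ constant on $[a_1,a_2]$, and the pointwise mechanism at $a_1$ would force $\alpha(\beta(a_1))\ge a_2>a_1$, contradicting the hypothesis; so $\beta$ is strictly increasing on $I$. For ``if,'' assuming $\beta$ strictly increasing on $I$ and $\alpha(\beta(a))=a'>a$ for some $a\in I$, the plateau of $\beta$ on $[a,a']$ would overlap $I$ in a nondegenerate subinterval, contradicting strict monotonicity --- except when $a$ is the right endpoint of $I$. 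I expect that right-endpoint case to be the one delicate point: if $\max I=\amax$ it is harmless, since $\alpha(\beta(\amax))=\amax$ follows from the weak-duality bound together with $\alpha(\cdot)\le\amax$; but if $\max I<\amax$ the statement has to be read with $I$ open on the right (equivalently, with $\beta$ strictly increasing on a neighbourhood of each point of $I$), since otherwise $\beta$ may plateau just beyond $\max I$. Keeping that endpoint bookkeeping honest is where I would concentrate the effort; the rest reduces to the routine monotonicity observations above.
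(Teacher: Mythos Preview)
Your argument follows the same route as the paper's proof: both rely on the monotonicity of $\beta$, the inequalities $\alpha(\beta(a))\ge a$ and $\beta(\alpha(b))\le b$, and the observation that a failure of $a=\alpha(\beta(a))$ forces a plateau of $\beta$ starting at $a$ (and conversely). Your treatment of the right-endpoint case is in fact more careful than the paper's, which tacitly assumes the plateau it produces lies inside $I$; your caveat that the statement should be read with $I$ open on the right (or with $\max I=\amax$) is a genuine refinement rather than a deviation.
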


\begin{proof}
First notice that $\beta(a)$
is non-decreasing as a function of $a$.
Now,  assume
that there exists $a_1 \in I $ such that
$a_1 < \alpha(\beta(a_1))=:a_2$. Then
$\beta(a_1) \le \beta(a_2)$  as $\beta(\cdot)$ is non-decreasing, and
$\beta(a_2) = \beta(\alpha(\beta(a_1)))
\le \beta(a_1)$  as $\beta(\alpha(b)) \le b$ for all $b$.
Therefore, $\beta(a_1) = \beta(a_2)$.
To prove the other direction, assume $\beta(a)$ is not strictly increasing in $I$.
 Then, there exist $a_1,a_2 \in I$ with  $ a_1 < a_2 $ such that $\beta(a_1) = \beta(a_2)$. Then, using that $\alpha(\beta(a)) \ge a$ for all $a$, we obtain
$\alpha(\beta(a_1)) = \alpha(\beta(a_2)) \ge a_2 > a_1$.
(See Figure~\ref{fig:var_risk} for an illustration of the proof.)
\end{proof}

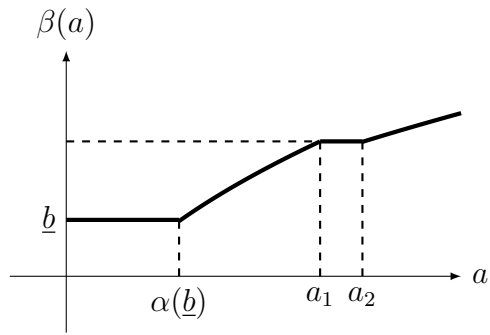
\begin{figure}[htb!]
\begin{center}
\begin{tikzpicture}[scale=0.75]
  \draw[smooth,samples=100,domain=0:1,  thick, color = black, dashed]
 plot(2,\x);
 \node[below] at (2,0){$\alpha(\bmin)$};
 \node[left] at (0,1){$\bmin$};
  \draw[smooth,samples=100,domain=0:2.3926,  thick, color = black, dashed]
 plot(4.5,\x);
  \draw[smooth,samples=100,domain=0:2.3926,  thick, color = black, dashed]
 plot(5.25,\x);
 \node[below] at (4.5,0){$a_1$};
\node[below] at (5.25,0){$a_2$};
 \draw[smooth,samples=100,domain=0:4.5,  thick, color = black, dashed]
 plot(\x,2.3926);
   \draw[smooth,samples=100,domain=0:2, ultra thick, color = black]
  plot(\x,1);
     \draw[smooth,samples=100,domain=2:4.5, ultra thick, color = black]
  plot(\x,2*\x^0.5-1.85);
     \draw[smooth,samples=100,domain=4.5:5.25, ultra thick, color = black]
  plot(\x, 2.3926);
     \draw[smooth,samples=100,domain=5.25:7, ultra thick, color = black]
  plot(\x,\x^0.6-0.32);
  \draw[-latex,color=black] (-1,0) -- (7,0);
  \draw[-latex,color=black] (0,-1) -- (0,4);
 \node[right] at (7,0){$a$};
 \node[above] at (0,4){$\beta(a)$};
\end{tikzpicture}
\end{center}
\caption{Illustration of Theorem~\ref{th:duality}.}
\label{fig:var_risk}
\end{figure}

As mentioned before, it is shown in \cite[][Thm. 3]{Krok02} that convexity in the risk measure, and concavity in the reward, provides sufficient conditions for Theorem~\ref{th:duality} to hold. This result can be obtained as a corollary of Theorem~\ref{th:duality}.

\newcommand{\amin}{\underline a}
\begin{corollary}
\label{cor:urasyev}
Let $\phi(x): \R^n \to \R$ be convex and $R(x): \R^n \to \R$ be concave. Assume $\{a \in [\alpha(\bmin),\amax]:\beta(a) > \bmin\}$ is non-empty and let $\amin = \inf\{a \in [\alpha(\bmin),\amax]:\beta(a) > \bmin\}$
Then $a = \alpha(\beta(a))$ for any $a \in [\amin,\amax] $.
\end{corollary}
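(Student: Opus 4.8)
The plan is to apply Theorem~\ref{th:duality} with the interval $I = [\amin,\amax]$; for this it suffices to check that $I \subseteq [\alpha(\bmin),\amax]$ and that $\beta$ is strictly increasing on $I$. The inclusion is immediate, since $\amin$ is the infimum of a nonempty subset of $[\alpha(\bmin),\amax]$ that is bounded above by $\amax$, so $\alpha(\bmin) \le \amin \le \amax$.

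First I would record two facts about $\beta$. \emph{(i) $\beta$ is convex} (here $\X$ is convex): rewriting the constraint $R(x)\ge a$ as $-R(x)\le -a$ with $-R$ convex, $\beta(a)$ is the optimal value of a convex program whose right-hand side is perturbed affinely in $a$, and such value functions are convex --- convex-combine near-optimal solutions for $a_1$ and $a_2$ and use joint convexity of $(x,a)\mapsto -R(x)+a$. Moreover $\beta$ is finite on $(-\infty,\amax]$, since a maximizer of $R$ over $\X$ is feasible for~\eqref{eq:riskopt} whenever $a\le\amax$, while $\beta(a)\ge\bmin$ always. \emph{(ii) $\beta(\alpha(\bmin)) = \bmin$}: a maximizer $x^*$ of the problem defining $\alpha(\bmin)$ satisfies $\phi(x^*)\le\bmin$, hence $\phi(x^*)=\bmin$ and $R(x^*)=\alpha(\bmin)$, so $x^*$ is feasible for~\eqref{eq:riskopt} at $a=\alpha(\bmin)$ with objective $\bmin$; the reverse inequality $\beta\ge\bmin$ is trivial.

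Next I would show $\beta$ is strictly increasing on $[\amin,\amax]$ by contradiction. If not, there are $a_1<a_2$ in $[\amin,\amax]$ with $\beta(a_1)=\beta(a_2)=:c$. Since $\beta$ is convex and nondecreasing (the latter is noted in the proof of Theorem~\ref{th:duality}), it is constant equal to $c$ on $(-\infty,a_2]$ throughout its domain: for $a<a_1$, write $a_1$ as a convex combination of $a$ and $a_2$ and combine convexity with monotonicity to get $\beta(a)=c$. In particular, as $\alpha(\bmin)\le\amin\le a_1<a_2$, fact (ii) forces $c=\bmin$, hence $\beta(a_2)=\bmin$. But $a_1\ge\amin$ implies $a_2>\amin = \inf\{a\in[\alpha(\bmin),\amax]:\beta(a)>\bmin\}$, so there is $a'\in[\amin,a_2)$ with $\beta(a')>\bmin$, whence $\beta(a_2)\ge\beta(a')>\bmin$ by monotonicity --- a contradiction. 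Therefore $\beta$ is strictly increasing on $[\amin,\amax]$, and Theorem~\ref{th:duality} yields $a=\alpha(\beta(a))$ for every $a\in[\amin,\amax]$.

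The argument is essentially bookkeeping once fact (i) is in hand; the only point needing care is the elementary observation that a convex nondecreasing function which repeats a value on $[a_1,a_2]$ is constant on the entire initial ray of its domain, together with ensuring that $\alpha(\bmin)$ lies in that ray and that $\beta$ is finite there. I do not expect a genuine obstacle beyond that.
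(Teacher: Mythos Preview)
Your argument is correct and takes a genuinely different route from the paper's. The paper works directly at the level of portfolios: assuming $\beta(a_1)=\beta(a_2)$ with $\amin<a_1<a_2$, it picks optimal solutions $x_1,x_2$ and the minimum-risk portfolio $\hat x$, and builds the explicit convex combination $x'=\epsilon\hat x+\epsilon x_1+(1-2\epsilon)x_2$ with $\epsilon=(a_2-a_1)/(2a_2-a_1-\alpha(\bmin))$, then checks by hand that $R(x')\ge a_1$ while $\phi(x')<\phi(x_1)$, contradicting optimality of $x_1$. You instead lift the convexity/concavity hypotheses to the value function: show $\beta$ itself is convex (and finite on $(-\infty,\amax]$), observe that a nondecreasing convex function that repeats a value on $[a_1,a_2]$ must be constant on the whole initial ray, and conclude $\beta(a_2)=\beta(\alpha(\bmin))=\bmin$, which clashes with the definition of $\amin$. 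The paper's approach is more self-contained and avoids invoking value-function convexity; your approach is more structural and, once fact~(i) is established, reduces the remainder to one-variable convex analysis. Both implicitly use convexity of $\X$ (the paper needs $x'\in\X$; you need it for the convexity of $\beta$), so there is no extra assumption on your side.
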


\begin{proof}
From Theorem~\ref{th:duality} is enough to show that $\beta(\cdot)$ is strictly increasing on $(\amin,\amax]$.
For sake of contradiction,  let $\amin < a_1 < a_2 \le \amax$ be such that $\beta(a_1) = \beta(a_2)$. Let $x_i := \argmin\{\phi(x): R(x) \ge a_i, x \in \X\}$ for $i=1,2$. Thus
$\phi(x_1) = \phi(x_2)$. Let $\hat x$ be the optimal min-risk portfolio, i.e. $\phi(\hat x) = \bmin$ and $R(\hat x) = \alpha(\bmin)$. Let $\epsilon := \frac{a_2-a_1}{2a_2 - a_1 - \alpha(\bmin)}$. Then $0 < \epsilon < 1$. Let
$x' = \epsilon \hat x + \epsilon x_1 + (1-2\epsilon)x_2$. From the convexity of $\phi$, we get
that
\[\phi(x') = \phi(\epsilon \hat x + \epsilon x_1 + (1-2\epsilon)x_2) \le \epsilon \phi(\hat x) + \epsilon \phi(x_1) + (1-2\epsilon)\phi(x_2) = \epsilon \bmin + (1-\epsilon)\phi(x_1)< \phi(x_1).\]
Also, by the concavity of $R(x)$, we get that
\[R(x') \ge \epsilon R(\hat x) + \epsilon R(x_1) + (1-2\epsilon)R(x_2) \ge \epsilon \alpha(\bmin) + \epsilon a_1 + (1-2\epsilon) a_2 = a_1.\]
Thus, $x_1 \not = \argmin\{\phi(x): R(x) \ge a_1, x \in \X\}$,
a contradiction.
\end{proof}

\citet[][Thm. 3]{Krok02}
assume a regularity condition for each value of the pair~$(a,b)$.
In contrast, in
Corollary~\ref{cor:urasyev}, the ranges of $a$ and $b$
for which the one-to-one correspondence between the
alternative formulations holds is precisely characterized.

Although sufficient, the convexity condition in  Corollary~\ref{cor:urasyev} is not necessary to have the one-to-one correspondence
between the portfolio allocation problems \eqref{eq:riskopt} and  \eqref{eq:retopt}. To illustrate this, we consider
the following simple example in which the risk measure $\phi(x)$ is related to the well-known Huber's function \cite[see, e.g.,][]{ROStats} that commonly appears in robust statistics.

\begin{example}
\label{ex:simple} Let $\kappa > 1$ be given.
Let the functions $\phi: \R \to \R$ and $R: \R \to \R$ be given by
\[
\phi(x) = \left \{     \begin{array}{ll}  x^2 & \text{if $|x| \le \kappa$}\\
                                                      x + \kappa(\kappa-1) &  \text{if $|x| \ge \kappa$}\\ \end{array} \right .,
\]
and $R(x) = x$. Also, let
the set $\mathcal{X} = [-2\kappa, 2\kappa]$. The function $\phi(x)$ is not convex, as $2\phi(\kappa) = 2\kappa^2 > (\kappa -1)^2 + \kappa + 1 + \kappa(\kappa -1) = \phi(k-1) + \phi(\kappa +1)$ (see Figure \ref{fig:exsimple} (left)). Thus the conditions of Corollary~\ref{cor:urasyev} are not satisfied. However, it is easy to see that the
function $\beta(a)$ is strictly increasing in the domain $a \ge \alpha(\bmin) = 0$ (see Figure \ref{fig:exsimple} (right)). Note that by changing the domain $\mathcal{X} = \R_+$ and $R(x) = x^2$ one has a similar example where $\beta(a)$ is strictly increasing but now neither $\phi(x)$ is convex nor $R(x)$ is concave.
\end{example}

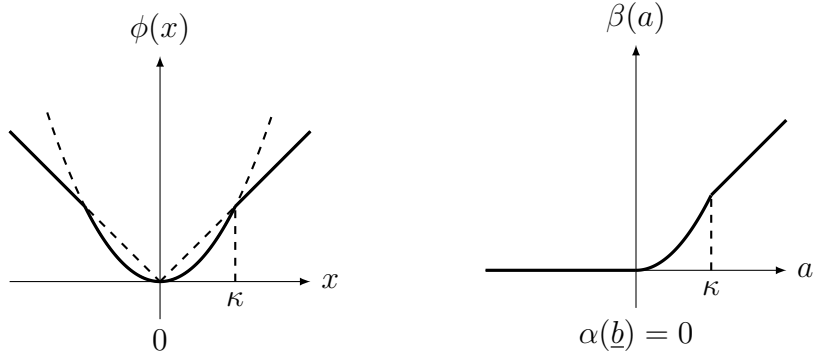
\begin{figure}[htb!]
\begin{center}
\begin{tikzpicture}
  \draw[smooth,samples=100,domain=-1.5:1.5,  thick, color = black, dashed]
 plot(\x,\x*\x);
   \draw[smooth,samples=100,domain=0:2,  thick, color = black, dashed]
 plot(\x,\x);
  \draw[smooth,samples=100,domain=-2:0,  thick, color = black, dashed]
 plot(\x,-\x);
  \draw[smooth,samples=100,domain=0:1,  thick, color = black, dashed]
 plot(1,\x);
  \draw[smooth,samples=100,domain=1:2,  thick, color = black, very thick]
 plot(\x,\x);
  \draw[smooth,samples=100,domain=-2:-1,  thick, color = black, very thick]
 plot(\x,-\x);
  \draw[smooth,samples=100,domain=-1:1,  thick, color = black, very thick]
 plot(\x,\x*\x);
  \draw[-latex,color=black] (-2,0) -- (2,0);
  \draw[-latex,color=black] (0,-0.5) -- (0,3);
 \node[right] at (2,0){$x$};
 \node[above] at (0,3){$\phi(x)$};
 \node[below] at (1,0){$\kappa$};
 \node[below] at (0,-0.5){$0$};
\end{tikzpicture} \qquad \qquad
\begin{tikzpicture}
  \draw[smooth,samples=100,domain=1:2,  thick, color = black, very thick]
 plot(\x,\x);
  \draw[smooth,samples=100,domain=-2:0,  thick, color = black, very thick]
 plot(\x,0);
  \draw[smooth,samples=100,domain=0:1,  thick, color = black, very thick]
 plot(\x,\x*\x);
  \draw[smooth,samples=100,domain=0:1,  thick, color = black, dashed]
 plot(1,\x);
  \draw[-latex,color=black] (-2,0) -- (2,0);
  \draw[-latex,color=black] (0,-0.5) -- (0,3);
 \node[right] at (2,0){$a$};
 \node[above] at (0,3){$\beta(a)$};
  \node[below] at (1,0){$\kappa$};
  \node[below] at (0,-0.5){$\alpha(\bmin) = 0$};
\end{tikzpicture}
\end{center}
\caption{Illustration of Example~\ref{ex:simple}. Function $\phi(x)$ (left), and
corresponding $\beta(a)$ (right).}
\label{fig:exsimple}
\end{figure}

As mentioned earlier, when the risk measure $\phi(x)$ is defined by the
portfolio's return~VaR, there is in general not a one-to-one correspondence between
the portfolio allocation problems  \eqref{eq:riskopt} and  \eqref{eq:retopt}. This is
formally stated in the next remark, which corrects the erroneous characterization
between problems \eqref{eq:riskopt} and  \eqref{eq:retopt} given in \cite{Lin09}.

\begin{remark}
When the risk measure $\phi(x)$ in \eqref{eq:riskopt} is defined by the
portfolio's return Value-at-Risk (VaR)
$\beta(a)$ is
not in general strictly increasing (in the domain $a \ge \alpha(\bmin)$). This is illustrated with
the numerical example below.
\end{remark}

\begin{example}
Consider the instance of problem  \eqref{eq:riskopt} in which $x \in \R^2$ represents
the percentage of money invested in the two assets Microsoft {\tt (MSFT)} and 3M {\tt (MMM)}.
Let $\X = \{x \in \R^2_+: x_1 + x_2 = 1\}$. Also, let $\phi(x)$ and $R(x)$ respectively be
the estimates of the portfolio's return VaR$_{5\%}$ and expected portfolio return based on
a sample monthly returns of  {\tt (MSFT)} and {\tt (MMM)},
from April 1986 to December 2006 (source Wharton Research Data Services (WRDS)).
After computing $\beta(a)$ in \eqref{eq:riskopt} one obtains Figure~\ref{fig:numVaR}.

\begin{figure}[!tbh]
\begin{center}
\includegraphics[width = 3in]{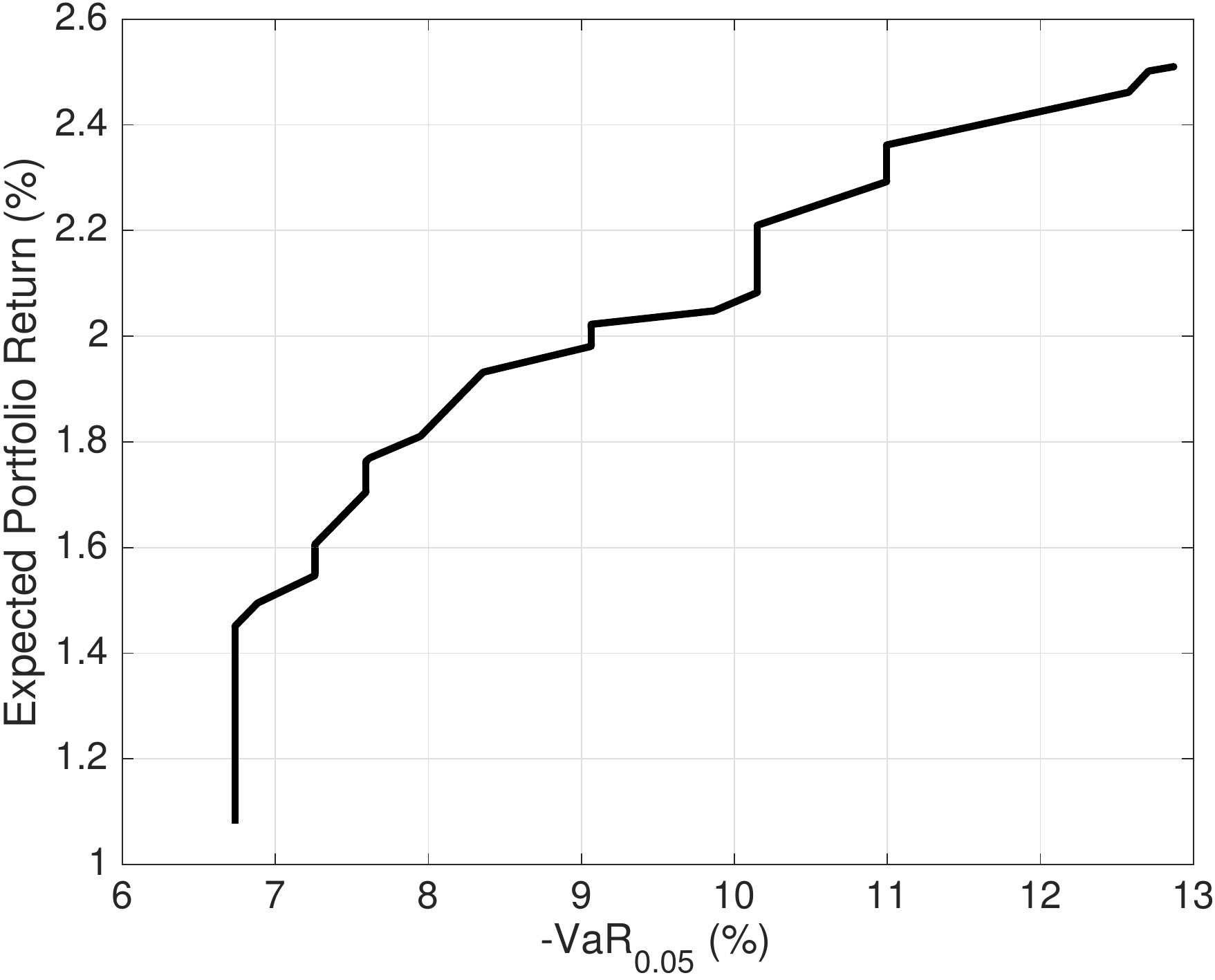}
\end{center}
\caption{Instance of $\beta(a)$ (cf., \eqref{eq:riskopt}) not being strictly increasing when the portfolio's risk measure is the
VaR of the portfolio returns.}
\label{fig:numVaR}
\end{figure}

Note that the areas of Figure~\ref{fig:numVaR} in which the risk remains constant while the expected portfolio return increases show that the VaR is not strictly increasing as a function of the expected portfolio return.

\end{example}

We finish this section by showing that one can take advantage of the alternative formulations \eqref{eq:riskopt}
and \eqref{eq:retopt} to obtain a measure of the closeness to optimality of a feasible solution of \eqref{eq:riskopt}
when an appropriate bound on the optimal value of \eqref{eq:retopt} can be obtained.

\begin{lemma}
\label{lem:bound}
Let $a \le \amax$. If $\alpha(b) < a$, then $ \beta(a) \ge b$.
\end{lemma}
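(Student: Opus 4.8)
The plan is to prove the statement in contrapositive form: assuming $\alpha(b) < a$ (with $a \le \amax$), I will show $\beta(a) \ge b$ by bounding $\phi(x)$ from below by $b$ for \emph{every} portfolio $x$ feasible for \eqref{eq:riskopt}, and then taking the infimum. Before that, I would first note that the hypothesis $a \le \amax$ is exactly what makes the statement non-vacuous: since $\X$ is compact there is a portfolio attaining $R(x) = \amax \ge a$, so the feasible region of \eqref{eq:riskopt} is nonempty and $\beta(a)$ is finite.

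The core step is the following observation. Take any $x \in \X$ with $R(x) \ge a$. If we had $\phi(x) \le b$, then $x$ would be feasible for \eqref{eq:retopt}, which would force $\alpha(b) \ge R(x) \ge a$, contradicting the assumption $\alpha(b) < a$. Hence $\phi(x) > b$ for every $x$ feasible for \eqref{eq:riskopt}, and therefore $\beta(a) = \inf\{\phi(x) : R(x) \ge a,\ x \in \X\} \ge b$, which is the claim. The convention $\alpha(b) = -\infty$ when \eqref{eq:retopt} is infeasible is absorbed automatically: in that case no $x \in \X$ satisfies $\phi(x) \le b$, so $\beta(a) \ge \bmin > b$.

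I do not anticipate a real obstacle here; the argument is essentially a one-line consequence of the definitions of $\alpha(\cdot)$ and $\beta(\cdot)$. The only point that needs care is tracking which hypothesis does what — namely that $a \le \amax$ guarantees feasibility of \eqref{eq:riskopt} (so $\beta(a) < \infty$), while the assumption $\alpha(b) < a$ rules out $\phi(x) \le b$ on the feasible set — and handling the strict versus non-strict inequalities and the $\pm\infty$ conventions consistently.
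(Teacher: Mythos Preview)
Your proposal is correct and follows essentially the same approach as the paper: both arguments hinge on the observation that any $x\in\X$ with $R(x)\ge a$ and $\phi(x)\le b$ would be feasible for \eqref{eq:retopt} and hence force $\alpha(b)\ge a$, contradicting the hypothesis. The only cosmetic difference is that the paper phrases this as a proof by contradiction on $\beta(a)<b$ (after first disposing of the trivial case $b<\bmin$), whereas you argue directly that $\phi(x)>b$ on the whole feasible set and then take the infimum; the underlying logic is identical.
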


\begin{proof}
Notice that if $b < \bmin$ then by definition $\beta(a) \ge \bmin > b$. Thus we assume $b \ge \bmin$. For the sake of contradiction
assume $\beta(a) <b$. Then there exists $x \in \X$ such that $R(x) \ge a$ and
$\phi(x) < b$. Thus $x$ is a feasible solution for \eqref{eq:retopt}, which implies $\alpha(b) \ge a$, a contradiction.
\end{proof}

\begin{proposition}
\label{p:bound}
Given $a \le \amax$. Let $\tilde{x} \in \R^n$ be a feasible solution of \eqref{eq:riskopt} and $\delta \ge 0$ be a given tolerance. If $\alpha(\phi(\tilde{x})-\delta) < a$. Then $\phi(\tilde{x})-\delta \le \beta(a) \le \phi(\tilde{x})$.
\end{proposition}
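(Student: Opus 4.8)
The plan is to combine Lemma~\ref{lem:bound} with the definition of $\beta(a)$, noting that $\tilde x$ being feasible for \eqref{eq:riskopt} already delivers one of the two inequalities for free. Concretely, since $\tilde x$ is feasible for \eqref{eq:riskopt}, its objective value $\phi(\tilde x)$ is an upper bound on the optimal value, so $\beta(a) \le \phi(\tilde x)$ holds immediately; this takes care of the right-hand inequality with no work.

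For the left-hand inequality, I would apply Lemma~\ref{lem:bound} with the specific choice $b := \phi(\tilde x) - \delta$. The hypothesis of the proposition, $\alpha(\phi(\tilde x) - \delta) < a$, is exactly the hypothesis $\alpha(b) < a$ of the lemma, and the hypothesis $a \le \amax$ is also shared. Hence the lemma yields $\beta(a) \ge b = \phi(\tilde x) - \delta$, which is the desired left-hand inequality. Chaining the two gives $\phi(\tilde x) - \delta \le \beta(a) \le \phi(\tilde x)$.

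There is essentially no obstacle here: the proposition is a direct corollary of Lemma~\ref{lem:bound} specialized to $b = \phi(\tilde x) - \delta$, together with the trivial observation that a feasible point's objective value bounds the optimum from above. The only thing to be slightly careful about is that $\delta \ge 0$ guarantees $b \le \phi(\tilde x)$, so the two bounds are consistent (the interval $[\phi(\tilde x)-\delta, \phi(\tilde x)]$ is nonempty), but this is not even needed for the argument — it is just a sanity check that the conclusion is meaningful. I would also remark, for the reader, on the intended use: $\delta$ plays the role of a certified optimality gap for the candidate portfolio $\tilde x$, since it sandwiches the unknown optimal risk $\beta(a)$ within $\delta$ of the achieved risk $\phi(\tilde x)$, and the condition $\alpha(\phi(\tilde x)-\delta) < a$ is what one would verify (or approximately verify via a relaxation) using the alternate formulation \eqref{eq:retopt}.
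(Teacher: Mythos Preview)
Your proposal is correct and matches the paper's intended approach: the paper states Proposition~\ref{p:bound} immediately after Lemma~\ref{lem:bound} without an explicit proof, precisely because it follows directly by applying the lemma with $b = \phi(\tilde x)-\delta$ together with the feasibility of $\tilde x$ for \eqref{eq:riskopt}, exactly as you describe.
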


In what follows, we use Proposition~\ref{p:bound} to provide an algorithm to address the
solution of the VaR portfolio problem. As mentioned earlier, in this case, solving the associated minimum risk portfolio problem~\eqref{eq:riskopt} or the maximum return portfolio problem \eqref{eq:retopt}
to optimality is inherently difficult.

\section{The algorithm}
\label{algorithm}

In this section, we provide an algorithm to obtain approximate solutions for the MILP formulation
of the VaR portfolio problem \eqref{eq:binaryprog}. First, the goal of the algorithm is to find a
near-optimal feasible solution for~\eqref{eq:binaryprog} (cf., Section~\ref{sec:lower}).
Next, the goal is to provide a near-optimality guarantee for this feasible solution (cf., Section~\ref{sec:upper}).

\subsection{Lower bound for optimal VaR}
\label{sec:lower}
Let us denote $[m] := \{1,\dots,m\}$. Now, given $J \subseteq [m]$,  let $J^c := [m] \setminus J$,
and consider the following problem:
\begin{equation}
\label{eq:lowJ}
\begin{array}{llll}
\underline{z}_J:= & \max & \nu \\[2ex]
&\st  & \displaystyle \sum_{j \in J} y_j = \lfloor \alpha m \rfloor\\
&     & My_j \ge \nu - x{\tr}\xi^j, & j \in J\\
&     & 0 \ge \nu- x{\tr}\xi^j, & j \in J^c\\
&     & x{\tr}\mu \ge \mu_0\\
&     & x{\tr}\one = 1\\
&     & x \in \mathcal{X} \subseteq \R^n_+, v \in \R\\
&     & y_j \in \{0,1\}, & j \in J.\\
\end{array}
\end{equation}
Note that $\eqref{eq:lowJ}$ is the optimization problem obtained from \eqref{eq:binaryprog} by setting $y_j = 0$ for all~$j \in J^c$. Hence $\underline{z}_J \leq z_{\text{VaR}}$ for all $J \subseteq [m]$. In Algorithm~\ref{alg:Algorithm-1} below, the formulation~$\eqref{eq:lowJ}$, together with an appropriate update of the set $J$, is used iteratively to obtain near-optimal feasible solutions to~\eqref{eq:binaryprog}. Specifically, after setting an initial set $J = J_0 \subset [m]$ problem~\eqref{eq:lowJ} is solved. Let~$y^J \in \{0,1\}^{|J|}$ be the optimal value of the binary variables of  \eqref{eq:lowJ}. These values are used to construct the linear program below
obtained by fixing the binary variables $y \in \{0,1\}^m$ in~\eqref{eq:binaryprog} such
that $y_J = y^J$, and $y_j = 0$, for all $j \in J^c$.
\begin{equation}
\label{eq:LP}
\begin{array}{llllll}
& \max & \nu \\[2ex]
&   \st  &  My^J_j \ge \nu - x{\tr}\xi^j, & j=1,\dots,m\\
&     & x{\tr}\mu \ge \mu_0\\
&     & x{\tr}\one = 1\\
&     & x \in \mathcal{X} \subseteq \R^n_+, \nu \in \R.\\
     \end{array}
     \end{equation}
After solving~\eqref{eq:LP}, the shadow prices associated with its big-M constraints (the first set of constraints in ~\eqref{eq:LP}) are used to update the set $J \subseteq [m]$. That is, the set $J$ is augmented by the samples' indices whose corresponding
big-M constraints in \eqref{eq:LP} have a positive shadow price. This type of update is similar to the one used when solving MILPs using {\em branch and price} techniques \cite[see, e.g.,][]{Trick07}.
As described in Algorithm~\ref{alg:Algorithm-1}, this procedure is applied iteratively until no further improvement in the lower bound of \eqref{eq:binaryprog} can be obtained. The VaR of the portfolio obtained at the end of the algorithm serves as a lower bound for the optimal VaR portfolio problem.

\begin{algorithm}[!htbp]
\caption{Lower bound of optimal VaR.}
\begin{algorithmic}[1]
\Procedure{Lower bound}{${\alpha, \mu_0,J_0 \subseteq [m], |J_0|  \geq \lfloor \alpha m \rfloor}$}
\State  $J \gets J_0$
\State $J^{\old} \gets [m]$
\While{$J^{\old} \cap J \neq J^{\old}$}
\State  $y^J \gets \arg_y(\underline{P}_J)$  \label{Linit}
\State $d \gets$ shadow prices of the big-M constraints in $\eqref{eq:lowJ}$
\State $J^{\old} \gets J$
\State  $J \gets \{i \in J: y^J_i =1\} \cup \{i \in J^c: d_i > 0\}$
\EndWhile
\State \textbf{return} $\tilde{x} \gets \arg_{x}\eqref{eq:lowJ}$  \Comment{feasible portfolio for~\eqref{eq:binaryprog} }
\State \textbf{return}  $\tilde{\nu} \gets \arg_{v}\eqref{eq:lowJ}$ \Comment{lower bound for~\eqref{eq:binaryprog} }
\State \textbf{return}  $\tilde{y} \gets \arg_{y}\eqref{eq:lowJ}$, $I_0 \gets \{i \in [m]: \tilde{y}_i = 1\}$ \Comment{to initialize Algorithm~\ref{alg:Algorithm-2} in Section~\ref{sec:upper}}
\EndProcedure
\end{algorithmic}
\label{alg:Algorithm-1}
\end{algorithm}

As it will be shown in Section~\ref{numerical}, Algorithm~\ref{alg:Algorithm-1} provides
a tighter lower bound~$\tilde{\nu} =\underline{z}_J$,  for the VaR portfolio problem~\eqref{eq:binaryprog} than those obtained using the CVaR-based algorithms proposed by~\cite{LarsMU02} in a comparable running time. More importantly, Algorithm~\ref{alg:Algorithm-1} provides a feasible solution $\tilde{x}, \tilde{\nu}$, for the VaR portfolio problem \eqref{eq:binaryprog} whose near-optimality can be guaranteed using Algorithm~\ref{alg:Algorithm-2}, which is presented in the next section.

\subsection{Upper bound for optimal return}
\label{sec:upper}
In this section, the aim is to obtain a measure of the closeness to optimality of the feasible solution $\tilde{x}, \tilde{v}$, for the VaR portfolio problem obtained by Algorithm~\ref{alg:Algorithm-1}.
For this purpose, we first apply Proposition~\ref{p:bound}  to the alternative formulations of the VaR portfolio problem~\eqref{eq:binaryprog} and~\eqref{maxsamp}.
Specifically, let $\delta >0$ be a specified tolerance, and  $\tilde{x} \in \R^n_+$ be a feasible portfolio for~\eqref{eq:binaryprog}, with associated VaR $\tilde{\nu}$; that is,
$\tilde{\nu} = {\min}^{\lfloor \alpha m \rfloor+1}\{\tilde{x}{\tr}\xi^1,\dots,\tilde{x}{\tr}\xi^m\}$.
Then, from Proposition~\ref{p:bound}, it follows that if the optimal value of~\eqref{maxsamp} satisfies
\begin{equation}
\label{eq:condition}
z^*_{\text{VaR}} < \mu_0 \Rightarrow
\tilde{\nu} - \delta \le z_{\text{VaR}}  \le \tilde{\nu}.
\end{equation}
That is, the near-optimality of the feasible portfolio $\tilde{x} \in \R^n_+$ to the optimal portfolio corresponding to the VaR portfolio problem~\eqref{eq:binaryprog}, can be measured in terms of $\delta\in \R_{++}$.

Clearly, directly solving~\eqref{maxsamp} to check whether condition $z^*_{\text{VaR}} < \mu_0$ in \eqref{eq:condition}
holds for a feasible portfolio $\tilde{x} \in \R^n_+$ of~\eqref{eq:binaryprog} is as computationally inefficient as directly solving~\eqref{eq:binaryprog}. Therefore, we present an appropriate upper bound for the alternative formulation of the VaR portfolio problem~\eqref{maxsamp} that allows to efficiently guarantee the near-optimality of the feasible solutions of the VaR problem obtained after using Algorithm~\ref{alg:Algorithm-1}. Specifically,
given $I \subseteq [m]$ with $|I| \ge \lfloor \alpha m \rfloor$ and $\tilde{\nu}$, a lower bound~\eqref{eq:binaryprog} (i.e, $\tilde{\nu} \le z_{\text{VaR}}$),
 consider the problem
\begin{equation}
\label{uppI}
\begin{array}{llll}
\overline{\mu}_I:= & \max & x{\tr}\mu  \\[2ex]
&\st  & \displaystyle \sum_{j \in I} y_j \leq \lfloor \alpha m \rfloor\\
&     & My_j \ge \tilde{\nu} - x{\tr}\xi^j, & j \in I\\
&     & x{\tr}\one = 1\\
&     & x \in \mathcal{X} \subseteq \R^n_+, \\
&     & y_j \in \{0,1\}, & j \in I.\\
\end{array}
\end{equation}
Notice that
$\overline{\mu}_{[m]} = z^*_{\text{VaR}}$ (cf., \eqref{maxsamp}). Next, we
show that~\eqref{uppI} is a relaxation of~\eqref{maxsamp}.

\begin{proposition}
\label{prop:dual}
Let $I \subseteq [m]$ with $|I| \ge \lfloor \alpha m \rfloor$. Then problem~\eqref{uppI}
is a relaxation of~\eqref{maxsamp}. That is, $\overline{\mu}_I \ge z^*_{\VaR}$.
\end{proposition}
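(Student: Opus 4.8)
The plan is to exhibit \eqref{uppI} as a bona fide relaxation of \eqref{maxsamp}, by showing that every feasible solution of \eqref{maxsamp} restricts to a feasible solution of \eqref{uppI} with the same objective value; taking suprema then gives $\overline{\mu}_I \ge z^*_{\VaR}$. Recall that $\overline{\mu}_{[m]} = z^*_{\VaR}$, so the two problems differ only in that \eqref{uppI} drops the big-M constraints indexed by $[m]\setminus I$, drops the corresponding binary variables, and replaces $\sum_{j=1}^m y_j \le \lfloor \alpha m\rfloor$ by $\sum_{j\in I} y_j \le \lfloor \alpha m\rfloor$.

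First I would dispatch the trivial case: if \eqref{maxsamp} is infeasible then $z^*_{\VaR} = -\infty$ under the usual convention that an infeasible maximization has optimal value $-\infty$, and there is nothing to prove. So assume \eqref{maxsamp} is feasible and let $(x,y)$ be any feasible solution, so that $x \in \X$, $x\tr\one = 1$, $y\in\{0,1\}^m$ with $\sum_{j=1}^m y_j \le \lfloor\alpha m\rfloor$, and $My_j \ge \tilde{\nu} - x\tr\xi^j$ for all $j \in [m]$. Set $\hat{y} := (y_j)_{j\in I}$. Then $(x,\hat{y})$ is feasible for \eqref{uppI}: the big-M constraints and the binary restrictions of \eqref{uppI} are precisely those of \eqref{maxsamp} indexed by $j\in I$; the constraints $x\in\X$ and $x\tr\one = 1$ are shared; and the cardinality constraint follows from $y_j \ge 0$ for $j \notin I$, since $\sum_{j\in I}\hat{y}_j = \sum_{j\in I} y_j \le \sum_{j=1}^m y_j \le \lfloor\alpha m\rfloor$.

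Since the objective $x\tr\mu$ of both problems depends only on $x$, which is left unchanged, every value attained by a feasible solution of \eqref{maxsamp} is also attained by a feasible solution of \eqref{uppI}; hence $\overline{\mu}_I \ge z^*_{\VaR}$. There is no substantive obstacle in this argument: the only two points that call for a moment's attention are the infeasibility convention and the observation that shrinking the index set in the cardinality constraint can only make it easier to satisfy, so that \eqref{uppI} is genuinely a relaxation (and not a restriction) of \eqref{maxsamp}.
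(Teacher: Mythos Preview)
Your proof is correct and follows the same restriction-of-feasible-solutions approach as the paper: take $(x,y)$ feasible for \eqref{maxsamp}, keep $x$ and restrict $y$ to the coordinates in $I$, and check feasibility in \eqref{uppI}. Your argument is in fact slightly more direct than the paper's, which detours through the order-statistic interpretation $\tilde{\nu} \le \min^{\lfloor \alpha m\rfloor+1}\{x\tr\xi^j : j\in[m]\} \le \min^{\lfloor \alpha m\rfloor+1}\{x\tr\xi^j : j\in I\}$ before reaching the same conclusion; your observation that $\sum_{j\in I} y_j \le \sum_{j\in[m]} y_j$ and that the big-$M$ constraints on $I$ are a subset of those on $[m]$ suffices without this detour.
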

\begin{proof}
Let $x \in \R^n_+, y \in \{0,1\}^m$ be feasible for~\eqref{maxsamp}, then we have that
$x\tr\one =1$, and $x\in \mathcal{X}$. Moreover, the fact that there exist
$y \in \{0,1\}^m$ such that $\sum_{j \in [m]} y_j \leq \lfloor \alpha m \rfloor$,
and $My_j \ge \tilde{\nu} - x{\tr}\xi^j,$ for all~$j \in [m]$, implies that
$\tilde{\nu} \le  {\min}^{\lfloor \alpha m \rfloor+1}\{x{\tr}\xi^j: j \in [m]\}
\le {\min}^{\lfloor \alpha m \rfloor+1}\{x{\tr}\xi^j: j \in I\}$. Thus, $y_I \in \{0,1\}^{|I|}$
satisfies $\sum_{j \in I} y_j \leq \lfloor \alpha m \rfloor$,
and $My_j \ge \tilde{\nu} - x{\tr}\xi^j,$ for all $j \in I$. Thus,~$(x, y_I)$ is a feasible
solution for~\eqref{uppI} with objective value $x\tr \mu$.
\end{proof}

Notice that from Proposition~\ref{prop:dual}, it follows that
\[\overline{\mu}_I < \mu_0
\Rightarrow z^*_{\VaR} < \mu_0.\]
In Algorithm~\ref{alg:Algorithm-2} below, we take advantage of this fact by iteratively using the formulation $\eqref{uppI} $, together with an appropriate update of the set $I$,
with the aim of showing the near-optimality of the feasible solution
of the VaR portfolio problem~\eqref{eq:binaryprog} obtained from Algorithm~\ref{alg:Algorithm-1}. The set $I$ is updated by heuristically
adding samples from the set $[m] \setminus I$ (see, Algorithm~\ref{alg:Algorithm-2} for details) until condition~\eqref{eq:condition} is satisfied.

{\tiny
\begin{algorithm}[!htb]
\caption{Upper bound for optimal return}
\begin{algorithmic}[1]
\Procedure{Upper bound}{${\alpha, \beta, \mu_{0}, \delta, \text{Iter}^{\max}, \text{ and } \tilde{x}, \tilde{\nu}, I_0 \text{ from Algorithm~\eqref{alg:Algorithm-1}}}$}
\State $m' \gets \lfloor (\beta \alpha m) \rfloor$
\State $I \gets I_0$,
\State $\nu \gets \tilde{\nu}+\delta$
\State $\overline{\mu}_I \gets \tilde{x}\tr \mu$
\While{$\overline{\mu}_I  \geq \mu_0, I \subset [m], \text{ and Iter}\le  \text{Iter}^{\max}$}
\State  $\overline{\mu}_I \gets$ objective value of~\eqref{uppI} \Comment{$+\infty$ if~\eqref{uppI} is infeasible}
\State $x \gets \arg_{x}\eqref{uppI}$ \Comment{optimal portfolio for~\eqref{uppI}}
\State $I^{\rm{old}} \gets I$
\State $\nu' \gets \min^{m'+1}\{x\tr \xi^j: j \in [m]\setminus I_0\}$
\State $I \gets I^{\rm{old}} \cup \{j \in [m] \setminus I^{\rm{old}}: x\tr \xi^j \le \nu' \}$
\EndWhile

\If{$\text{Iter} <  \text{Iter}^{\max}$}
\State {\bf return} The $\delta$ near-optimality of $\tilde{x}, \tilde{\nu}$ is proven
\Else
\State {\bf return} The $\delta$ near-optimality of $\tilde{x}, \tilde{\nu}$ could not be verified
\EndIf
\EndProcedure
\end{algorithmic}
\label{alg:Algorithm-2}
\end{algorithm}
}

\section{Numerical Results}
\label{numerical}
In this section, we present numerical results to compare the performace of Algorithm~\ref{alg:Algorithm-1} against the CVaR-based algorithms proposed by~\cite{LarsMU02} to obtain lower bounds on the VaR portfolio problem~\eqref{eq:binaryprog}. Moreover, we compare the performance of Algorithm~\ref{alg:Algorithm-1} and
Algorithm~\ref{alg:Algorithm-2} to obtain guaranteed near-optimal solutions
for the VaR portfolio problem~\eqref{eq:binaryprog}, against directly solving~\eqref{eq:binaryprog} using state-of-the-art mixed integer linear programming (MILP) solvers.

To carry out the experiments we use the daily returns data from
Kenneth R. French's website {\small \url{http://mba.tuck.dartmouth.edu/pages/faculty/ken.french}} on 100 portfolios formed on size and book-to-market (10 x 10). The data can be downloaded at {\small \url{http://mba.tuck.dartmouth.edu/pages/faculty/ken.french/ftp/100_Portfolios_10x10_Daily_TXT.zip}}. From this data, instances of the VaR portfolio problem~\eqref{eq:binaryprog} having 
number
of assets $n \in [30,90]$, number of samples $m \in [1000, 3500]$ (for every value of $n$), and
expected profit $\mu_0 \in \{\mu^{-} +\frac{i}{k+1}(\mu^{+}- \mu^{-}): i=1,\dots,k\}$, with $k=6$ and $\mu^+$ (resp. $\mu^-$) is the largest (lowest) asset mean return. Similar to~\cite{Benati07, FengWS15}, the parameter $\alpha \in (0,1)$ (cf., \eqref{vardef}) is set to the popular value used in practice of $\alpha = 0.01$. 

All the code necessary to create the instances of the optimization problems discussed throughout the article is implemented using {\tt Matlab 2016a} and the modelling language {\tt YALMIP}, which is available at
\url{users.isy.liu.se/johanl/yalmip/}. {\tt Gurobi 6.5.0}  is used to obtain the solution of all the necessary linear programs and MILPs on a Intel(R) Core (TM) i3-2310M CPU @ 2.10 GHz, 4GB RAM machine.

\subsection{Lower bound for portfolio's VaR}

We compare the performace of Algorithm~\ref{alg:Algorithm-1} against the CVaR-based algorithms proposed by~\cite{LarsMU02} to obtain lower bounds on the VaR portfolio problem~\eqref{eq:binaryprog}.

In all instances, Algorithm~\ref{alg:Algorithm-1} is initialized by setting
$J_0$ as the first $m_0 = \lceil 2\alpha m \rceil$ samples of the instance.
Also, the algorithms being compared are allowed to run for a maximum time of up 3600 seconds.

The lower bound results on the VaR portfolio problem~\eqref{eq:binaryprog} obtained by the three~(3) algorithms
are summarized in Table~\ref{tab:lowbounds}, Figure~\ref{fig_LB_1}, \ref{fig_LB_2}, and~\ref{fig_LB_3}. In Table~\ref{tab:lowbounds},
for every combination of the number of assets ($n$) and the number of samples~($m$), an average is taken
over the instances with different values of $\mu_0$, between the values  $\mu_0^{\min}$ and  $\mu_0^{\max}$.
For each algorithm, the column ``gap'', indicates the relative percentage
error between the lower bound on the VaR portfolio problem~\eqref{eq:binaryprog} and its optimal solution provided by solving the MILP~\eqref{eq:binaryprog}. In the few instances when the
MILP~\eqref{eq:binaryprog} cannot be solved within the maximum allowed time (of 3600 s.), the optimal solution of~\eqref{eq:binaryprog} is replaced by the best (higher) lower bound obtained from the lower bound algorithms. Thus, the gap columns in Table~\ref{tab:lowbounds} cleary show that
Algorithm~\ref{alg:Algorithm-1} provides tighter lower bounds on the VaR portfolio problem~\eqref{eq:binaryprog},
than the ones provided by Algorithm~1 and Algorithm~2~\cite[cf.,][]{LarsMU02}. Also, it is clear that the percentage by which
Algorithm~\ref{alg:Algorithm-1} provides tighter bounds than Algorithms~1 and~2 is substantial and ranges between $1\%-7\%$ on average. A more granular evidence of this result is shown in Figures~\ref{fig_LB_1} and~\ref{fig_LB_2}. In these figures, for each algorithm, the relative gap with respect to the optimal value of the VaR portfolio problem~\eqref{eq:binaryprog} for each of the instances considered is plotted in the $y$-axis, while
the $x$-axis labels indicate the values of the number of samples ($m$), and expected return ($\mu_0$) of the instance. Also, the number of assets ($n$) is indicated in each of the plots.  In the next section, the tightness of the bounds provided by Algorithm~\ref{alg:Algorithm-1} will be key to be able to guarantee the near-optimality of the solutions for the VaR portfolio problem~\eqref{eq:binaryprog} provided by Algorithm~\ref{alg:Algorithm-1}.

As shown in Table~\ref{tab:lowbounds}, the tighter bounds obtained by Algorithm~\ref{alg:Algorithm-1}, in comparison with Algorithm~1 and Algorithm~2 in~\cite{LarsMU02}, are obtained in comparable running times. As mentioned earlier, in
Table~\ref{tab:lowbounds},
for every combination of the number of assets ($n$) and the number of samples~($m$), an average is taken
over the instances with different values of $\mu_0$, between the values  $\mu_0^{\min}$ and  $\mu_0^{\max}$.
For each algorithm, the column ``$T/T^*$'', indicates the average (over instances with different values of $\mu_o$, and equal $n$, $m$) of the ratio between the time taken by each of the algorithms and the minimum of these times on an instance with a particular $\mu_0 \in [\mu_0^{\max}, \mu_0^{\min}]$.
 From these results it is clear that the average times of the three algorithms are mostly comparable. In Figure~\ref{fig_LB_3}, the average running time information of the algorithms is provided. It is clear from this figure that most of the time, the average time taken by the three algorithms is similar, and that even when there are significant differences between the times, such differences are not of significant practical relevance, since the times required by the algorithms are in the range of at most hundreds of seconds.

\begin{table}[htb]
\setlength{\abovecaptionskip}{10pt}
{\footnotesize
\begin{center}
\begin{tabular}{ccrccccccccccc}
\toprule
\multicolumn{2}{c}{Size} && \multicolumn{2}{c}{$\mu_0$}  && \multicolumn{2}{c}{Alg.~\ref{alg:Algorithm-1}} &&  \multicolumn{2}{c}{Alg.~1} &&  \multicolumn{2}{c}{Alg.~2}\\
\cmidrule{1-2} \cmidrule{4-5} \cmidrule{7-8} \cmidrule{10-11} \cmidrule{13-14}
$n$ & $m$ && $\min$ & $\max$ && \multicolumn{1}{c}{gap} & \multicolumn{1}{c}{$T/T^*$}&& \multicolumn{1}{c}{gap} & \multicolumn{1}{c}{$T/T^*$} &&\multicolumn{1}{c}{gap} & \multicolumn{1}{c}{$T/T^*$}\\
\midrule
30	&	1000	&&	0.019	&	0.058	&&	0.04	&	1.1	&&	1.23	&	1.4	&&	0.94	&	1.2	\\
30	&	1500	&&	0.043	&	0.071	&&	0.08	&	1.4	&&	1.37	&	1.2	&&	1.48	&	1.0	\\
30	&	2000	&&	-0.018	&	0.056	&&	0.29	&	1.1	&&	3.02	&	1.1	&&	2.32	&	1.0	\\
30	&	2500	&&	0.012	&	0.046	&&	0.00	&	1.1	&&	2.85	&	1.2	&&	2.56	&	1.0	\\
50	&	1000	&&	0.015	&	0.069	&&	0.00	&	1.0	&&	2.23	&	1.2	&&	1.13	&	1.1	\\
50	&	1500	&&	0.062	&	0.075	&&	0.14	&	1.3	&&	1.59	&	1.1	&&	1.59	&	1.0	\\
50	&	2000	&&	-0.018	&	0.056	&&	0.29	&	1.1	&&	2.87	&	1.1	&&	2.45	&	1.0	\\
50	&	2500	&&	0.012	&	0.068	&&	0.00	&	1.1	&&	3.04	&	1.0	&&	3.54	&	1.0	\\
50	&	3000	&&	0.005	&	0.054	&&	0.00	&	1.4	&&	7.00	&	1.0	&&	6.22	&	1.1	\\
50	&	3500	&&	0.012	&	0.058	&&	0.00	&	1.1	&&	1.14	&	1.0	&&	1.14	&	1.0	\\
70	&	1000	&&	0.015	&	0.069	&&	0.00	&	1.0	&&	1.50	&	2.6	&&	2.71	&	4.3	\\
70	&	1500	&&	0.039	&	0.075	&&	0.06	&	1.3	&&	0.84	&	1.5	&&	1.30	&	1.2	\\
70	&	2000	&&	-0.017	&	0.061	&&	0.00	&	1.2	&&	1.96	&	1.9	&&	1.80	&	1.8	\\
70	&	2500	&&	0.012	&	0.068	&&	0.00	&	2.1	&&	5.82	&	2.6	&&	4.78	&	2.4	\\
70	&	3000	&&	0.005	&	0.018	&&	0.00	&	6.0	&&	4.33	&	9.3	&&	2.32	&	1.0	\\
70	&	3500	&&	0.012	&	0.069	&&	0.12	&	2.1	&&	2.94	&	1.0	&&	2.73	&	1.1	\\
90	&	1000	&&	0.007	&	0.048	&&	0.00	&	1.1	&&	7.55	&	1.1	&&	2.37	&	1.1	\\
90	&	1500	&&	0.063	&	0.084	&&	0.00	&	1.5	&&	0.94	&	1.0	&&	0.94	&	1.0	\\
90	&	2000	&&	0.061	&	0.061	&&	0.02	&	1.5	&&	1.68	&	1.0	&&	0.28	&	1.0	\\
90	&	2500	&&	0.012	&	0.068	&&	0.28	&	3.4	&&	7.49	&	1.1	&&	6.28	&	1.0	\\
90	&	3000	&&	0.005	&	0.068	&&	0.00	&	2.9	&&	2.38	&	1.5	&&	2.09	&	1.0	\\
90	&	3500	&&	0.012	&	0.069	&&	0.00	&	3.5	&&	3.12	&	1.0	&&	3.43	&	1.1	\\\bottomrule\end{tabular}
\end{center}
}
\caption{Performance of Algorithm~\ref{alg:Algorithm-1} vs. Algorithm~1 and Algorithm~2 in~\citep{LarsMU02} to compute lower bounds on the VaR portfolio allocation problem~\eqref{eq:binaryprog}.
The column gap indicates the VaR lower bound \%~gap to the optimal VaR.
The column $T/T^*$, is the ratio between the time required to obtain the lower bound $T$ against the minimum time needed by the three algorithms $T^*$.
}
\label{tab:lowbounds}
\end{table}

\begin{figure}[!htb]
  \begin{subfigure}[b]{0.5\linewidth}
    \centering
    \includestandalone[width=\linewidth, height = \linewidth]{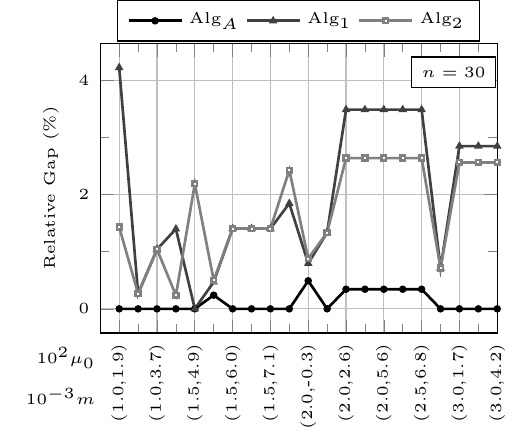}
  \end{subfigure}%%
  \begin{subfigure}[b]{0.5\linewidth}
    \centering
    \includestandalone[width=\linewidth,  height = \linewidth]{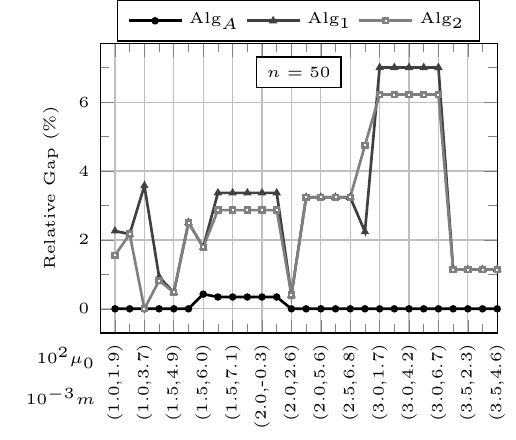}
  \end{subfigure}
    \caption{Comparison of the relative gap between the optimal value
     of the VaR portfolio problem~\eqref{eq:binaryprog} and the lower bounds for~\eqref{eq:binaryprog} provided by Algorithm~\ref{alg:Algorithm-1}, and Algorithms~1 and~2  by~\cite{LarsMU02}.}
  \label{fig_LB_1}
\end{figure}

\begin{figure}[!htb]
  \begin{subfigure}[b]{0.5\linewidth}
    \centering
    \includestandalone[width=\linewidth, height = \linewidth]{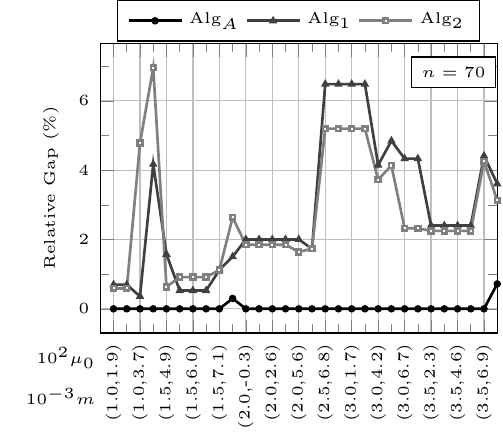}
     \end{subfigure}%%
  \begin{subfigure}[b]{0.5\linewidth}
    \centering
    \includestandalone[width=\linewidth,  height = \linewidth]{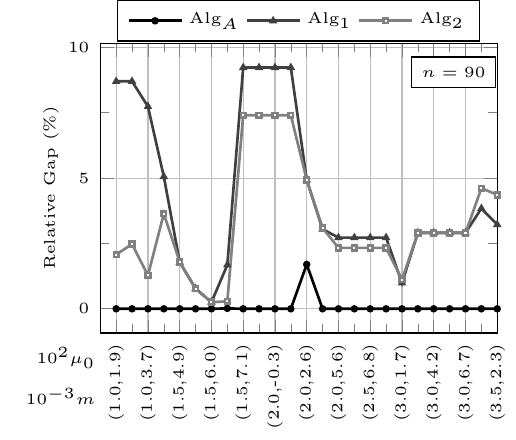}
   \end{subfigure}
     \caption{Comparison of the relative gap between the optimal value
     of the VaR portfolio problem~\eqref{eq:binaryprog} and the lower bounds for~\eqref{eq:binaryprog} provided by Algorithm~\ref{alg:Algorithm-1}, and Algorithms~1 and~2  by~\cite{LarsMU02}.}%,
\label{fig_LB_2}
\end{figure}

\begin{figure}[!tbh]
    \centering
    \includestandalone[width=\linewidth]{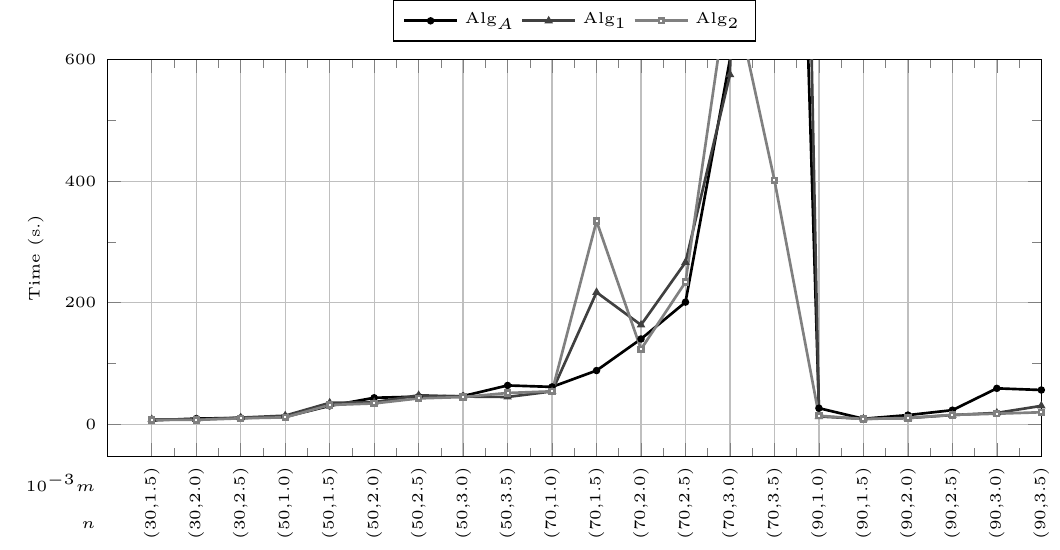}
    \caption{Comparison of average time in seconds needed to run Algorithm~\ref{alg:Algorithm-1}, and Algorithm~1 and~2 in \citep{LarsMU02} to obtain lower bounds for the VaR portfolio problem~\eqref{eq:binaryprog}, for instances with given $n, m$ and different values of $\mu_0$ (cf. Table~\ref{tab:lowbounds}).}
    \label{fig_LB_3}
     \end{figure}%%

\subsection{Near-optimal VaR portfolio}
In this section, we show that by using Algorithm~\ref{alg:Algorithm-1} and Algorithm~\ref{alg:Algorithm-2}, one
can efficiently compute guaranteed near-optimal solutions for the VaR portfolio problem~\eqref{eq:binaryprog}. For that purpose, to obtain the results described below, we first run Algorithm~\ref{alg:Algorithm-1} with $J_0$ being the first $m_0 = \lceil 2\alpha m \rceil$ samples of the instance. The resulting portfolio $\tilde{x} \in \R^n_+$, VaR lower bound $\tilde{\nu} \in \R$,
and the set $I_0$ (cf., end of  Algorithm~\ref{alg:Algorithm-1}) are then used to initialize  Algorithm~\ref{alg:Algorithm-2},
Also, we set  $\beta = 0.1$, and $\delta = 0.01\tilde{\nu}$. That is, we run
Algorithm~\ref{alg:Algorithm-2} seeking to provide a
$1\%$ near-optimality guarantee for the portfolio $\tilde{x} \in \R^n_+$. In Table~\ref{tab:mainsmall} and Figure~\ref{fig_VaR}, the results of finding a near optimal solution to the VaR portfolio problem using Algorithms~\ref{alg:Algorithm-1} and~\ref{alg:Algorithm-2} versus directly solving the MILP formulation~\eqref{eq:binaryprog} is compared.  For the purpose of brevity, of all the instances considered, Table~\ref{tab:mainsmall} shows, for a particular number of assets $n$ and samples $m$, the instances in which the MILP solver finds the optimal solution of the VaR problem in the shortest and longest time (depending on the value of $\mu_0$). By comparing the columns ``VaR$^*$'' and ``VaR'' in Table~\ref{tab:mainsmall}, it is clear that the lower bound on the VaR portfolio problem~\eqref{eq:binaryprog} it's equal or very close to the optimal value of the VaR portfolio problem~\eqref{eq:binaryprog} (as illustrated also in Figures~\ref{fig_LB_1} and~\ref{fig_LB_2}). Note that these lower bounds are well within the $1\%$ desired tolerance. In
Table~\ref{tab:mainsmall},
$T^*$ is the time taken to solve the MILP formulation~\eqref{eq:binaryprog} of the VaR portfolio problem, and $T$ is the time that is taken to obtain a guaranteed near optimal solution for the VaR portfolio problem using Algorithms~\ref{alg:Algorithm-1} and~\ref{alg:Algorithm-2}. Thus, it is clear from the
$T^*/T$ columns in Table~\ref{tab:mainsmall} that the latter approach is between 1.2 to 46 times faster than directly solving the MILP formulation. On average, the speed up provided by using Algorithms~\ref{alg:Algorithm-1} and~\ref{alg:Algorithm-2} is approximately of 14 times. Given the time is takes to solve some of the instances of the VaR portfolio, this speed up would be crucial to solve practical instances of the VaR portfolio problem. The effect of the speed up provided by Algorithms~\ref{alg:Algorithm-1} and~\ref{alg:Algorithm-2} can be seen graphically in Figure~\ref{fig_VaR}, where the time required by Algorithms~\ref{alg:Algorithm-1} and~\ref{alg:Algorithm-2}, versus the time required to solve the MILP formulation of the VaR optimization problem instances in Table~\ref{tab:mainsmall}, is shown in a semilogarithmic plot.
\begin{figure}[!htb]
    \centering
    \includestandalone[width=0.9\linewidth]{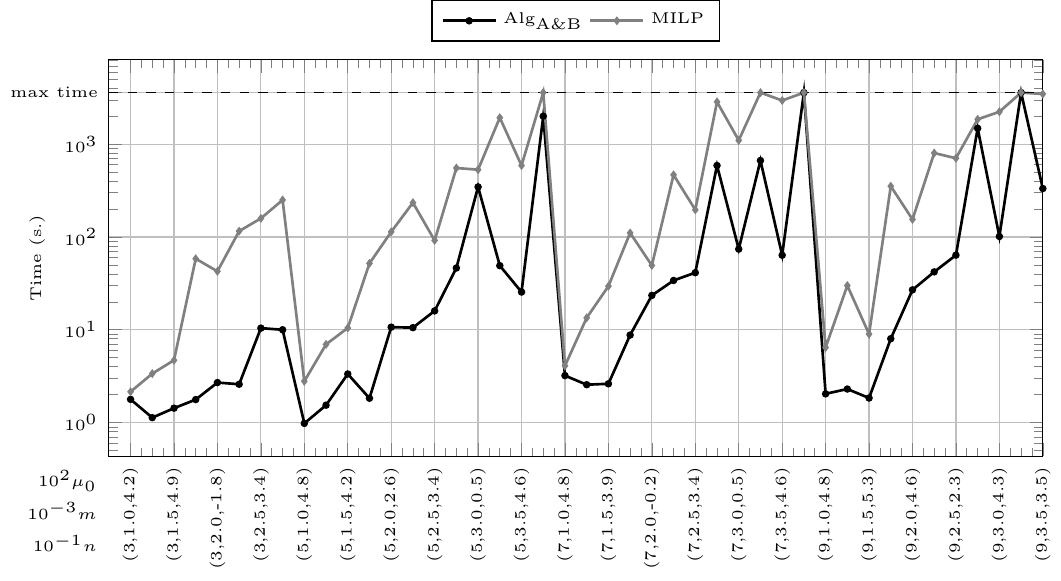}
     \caption{Comparison of the time taken by Algorithm~\eqref{alg:Algorithm-1} and~\eqref{alg:Algorithm-2} vs directly solving the MILP formulation of the VaR portfolio problem~\eqref{eq:binaryprog} for instances with different values of $n$, $m$, and $\mu_0$.}
\label{fig_VaR}
\end{figure}

\begin{table}[!tbh]
\setlength{\abovecaptionskip}{10pt}
{\footnotesize
\begin{center}
\begin{tabular}{cccrcrrcrrrcccccrcrrcrrr}
\toprule
 & && & &       \multicolumn{2}{c}{Full MILP} &&  \multicolumn{3}{c}{Alg.~\ref{alg:Algorithm-1} \&~\ref{alg:Algorithm-2}} &&&&   &&&&     \multicolumn{2}{c}{Full MILP} &&  \multicolumn{3}{c}{Alg.~\ref{alg:Algorithm-1} \&~\ref{alg:Algorithm-2}}\\
 \cmidrule{6-7} \cmidrule{9-11}   \cmidrule{19-20} \cmidrule{22-24}
$n$ &$m$ && \multicolumn{1}{c}{$\mu_0$} && \multicolumn{1}{c}{$\VaR^*$} & \multicolumn{1}{c}{T$^*$}  &&  \multicolumn{1}{c}{$\VaR$} & \multicolumn{1}{c}{T} & \multicolumn{1}{c}{$\rm{T}^*/ \rm{T}$} &&& $n$ & $m$ && \multicolumn{1}{c}{$\mu_0$}  && \multicolumn{1}{c}{$\VaR^*$} & \multicolumn{1}{c}{T$^*$} &&  \multicolumn{1}{c}{$\VaR$} & \multicolumn{1}{c}{T}  & \multicolumn{1}{c}{$\rm{T}^*/T$}\\
\midrule
30	&	1000	&&	0.058	&&	-2.554	&	2.2	&&	-2.560	&	1.78	&	1.2	&~~&~~&	70	&	1500	&&	0.075	&&	-2.002	&	29.5	&&	-2.008	&	2.6	&	11.2\\
30	&	1000	&&	0.042	&&	-1.982	&	3.4	&&	-1.982	&	1.13	&	3.0	&\qquad&\qquad&	70	&	1500	&&	0.039	&&	-1.684	&	110.8	&&	-1.684	&	8.8	&	12.6\\
30	&	1500	&&	0.071	&&	-1.968	&	4.7	&&	-1.968	&	1.43	&	3.3	&\qquad&\qquad&	70	&	2000	&&	0.061	&&	-1.836	&	49.4	&&	-1.836	&	23.5	&	2.1\\
30	&	1500	&&	0.049	&&	-1.702	&	58.4	&&	-1.702	&	1.77	&	32.9	&\qquad&\qquad&	70	&	2000	&&	-0.002	&&	-1.703	&	467.6	&&	-1.703	&	34.0	&	13.7\\
30	&	2000	&&	0.026	&&	-1.721	&	42.7	&&	-1.727	&	2.7	&	15.8	&\qquad&\qquad&	70	&	2500	&&	0.068	&&	-2.121	&	196.4	&&	-2.121	&	41.3	&	4.7\\
30	&	2000	&&	-0.018	&&	-1.721	&	115.8	&&	-1.727	&	2.59	&	44.8	&\qquad&\qquad&	70	&	2500	&&	0.034	&&	-1.876	&	2860.1	&&	-1.876	&	589.6	&	4.8\\
30	&	2500	&&	0.057	&&	-1.951	&	158.7	&&	-1.951	&	10.42	&	15.2	&\qquad&\qquad&	70	&	3000	&&	0.068	&&	-2.024	&	1101.1	&&	-2.024	&	74.1	&	14.8\\
30	&	2500	&&	0.034	&&	-1.951	&	250.6	&&	-1.951	&	10.01	&	25.1	&\qquad&\qquad&	70	&	3000	&&	0.005	&&	\multicolumn{1}{c}{*}	&	\multicolumn{1}{c}{*}	&&	-1.851	&	667.7	&	\multicolumn{1}{c}{$\star$}\\
50	&	1000	&&	0.069	&&	-2.449	&	2.8	&&	-2.449	&	0.98	&	2.8	&\qquad&\qquad&	70	&	3500	&&	0.069	&&	-1.943	&	2960.4	&&	-1.957	&	63.7	&	46.5\\
50	&	1000	&&	0.048	&&	-1.973	&	7.0	&&	-1.973	&	1.54	&	4.5	&\qquad&\qquad&	70	&	3500	&&	0.046	&&	\multicolumn{1}{c}{*}	&	\multicolumn{1}{c}{*}	&&	\multicolumn{1}{c}{*}	&	\multicolumn{1}{c}{*}	&	\multicolumn{1}{c}{*}\\
50	&	1500	&&	0.075	&&	-2.041	&	10.5	&&	-2.050	&	3.34	&	3.1	&\qquad&\qquad&	90	&	1000	&&	0.076	&&	-2.325	&	6.5	&&	-2.339	&	2.0	&	3.1\\
50	&	1500	&&	0.042	&&	-1.699	&	52.0	&&	-1.699	&	1.83	&	28.5	&\qquad&\qquad&	90	&	1000	&&	0.048	&&	-1.906	&	30.0	&&	-1.906	&	2.3	&	13.0\\
50	&	2000	&&	0.011	&&	-1.721	&	113.6	&&	-1.727	&	10.68	&	10.7	&\qquad&\qquad&	90	&	1500	&&	0.084	&&	-2.313	&	9.0	&&	-2.313	&	1.8	&	4.8\\
50	&	2000	&&	0.026	&&	-1.721	&	234.5	&&	-1.727	&	10.55	&	22.2	&\qquad&\qquad&	90	&	1500	&&	0.053	&&	-1.669	&	353.0	&&	-1.669	&	8.0	&	44.0\\
50	&	2500	&&	0.068	&&	-2.164	&	91.9	&&	-2.164	&	15.99	&	5.8	&\qquad&\qquad&	90	&	2000	&&	0.061	&&	-1.825	&	154.8	&&	-1.825	&	27.0	&	5.7\\
50	&	2500	&&	0.034	&&	-1.920	&	554.3	&&	-1.920	&	46.33	&	12.0	&\qquad&\qquad&	90	&	2000	&&	0.046	&&	-1.696	&	803.4	&&	-1.696	&	42.1	&	19.1\\
50	&	3000	&&	0.067	&&	-2.028	&	531.7	&&	-2.004	&	346.59	&	1.5	&\qquad&\qquad&	90	&	2500	&&	0.068	&&	-2.115	&	706.5	&&	-2.115	&	63.8	&	11.1\\
50	&	3000	&&	0.005	&&	-1.887	&	1933.9	&&	-1.887	&	49.15	&	39.4	&\qquad&\qquad&	90	&	2500	&&	0.023	&&	-1.826	&	1855.4	&&	-1.826	&	1483.7	&	1.2\\
50	&	3500	&&	0.069	&&	-1.982	&	589.4	&&	-1.984	&	25.53	&	23.1	&\qquad&\qquad&	90	&	3000	&&	0.068	&&	-2.024	&	2237.0	&&	-2.024	&	101.4	&	22.1\\
50	&	3500	&&	0.046	&&	\multicolumn{1}{c}{*}	&	\multicolumn{1}{c}{*}	&&	-1.849	&	2003.1	&	\multicolumn{1}{c}{$\star$}	&\qquad&\qquad&	90	&	3000	&&	0.043	&&	\multicolumn{1}{c}{*}	&	\multicolumn{1}{c}{*}	&&	\multicolumn{1}{c}{*}	&	\multicolumn{1}{c}{*}	&	\multicolumn{1}{c}{*}\\
70	&	1000	&&	0.069	&&	-2.296	&	4.1	&&	-2.296	&	3.21	&	1.3	&\qquad&\qquad&	90	&	3500	&&	0.069	&&	-1.931	&	3479.2	&&	-1.931	&	333.1	&	10.4\\
70	&	1000	&&	0.048	&&	-1.901	&	13.5	&&	-1.901	&	2.56	&	5.3	&\qquad&\qquad&	90	&	3500	&&	0.035	&&	\multicolumn{1}{c}{*}	&	\multicolumn{1}{c}{*}	&&	\multicolumn{1}{c}{*}	&	\multicolumn{1}{c}{*}	&	\multicolumn{1}{c}{*}\\
\midrule
\multicolumn{6}{l}{Average Speed Up} & & & & &  14.35 & & &\multicolumn{6}{l}{Average Speed Up} & & & & & 13.64\\
\bottomrule
\end{tabular}
\end{center}
}
\caption{Comparison of VaR values and running times of full MILP formulation~\eqref{eq:binaryprog} vs. Algorithms~\ref{alg:Algorithm-1} \&~\ref{alg:Algorithm-2}, for instances of the VaR portfolio problem~\eqref{eq:binaryprog} with different no. of assets ($n$),
no. of samples ($m$), and expected return $\mu_0$. The column~$T^*/T$ shows the speed up obtained with Algorithms~\ref{alg:Algorithm-1} \&~\ref{alg:Algorithm-2} over solving the full MILP formulation. Instances with ``***'' where not solved within the $3600$s. limit time. The ``$\star$'' in column ~$T^*/T$ that ratio cannot be computed.}
\label{tab:mainsmall}
\end{table}

\section{Final Remarks}
\label{s:conclusions}

Thus far, we have only considered portfolio allocation problems in which no short positions are
allowed (i.e., $\mathcal{X} \subseteq \R^n_+$ in~\eqref{nominal}). In practice, none of the
main characteristics of the MILP formulation~\eqref{eq:binaryprog} of the VaR portfolio problem change when considering
portfolios were short positions are allowed (i.e., $\mathcal{X} \subseteq \R^n$ in~\eqref{nominal}).
Clearly, only the choice of the Big-M constant $M$
 is affected by allowing
short positions. However, under the practical assumption that there is $U \in \R_+$  (e.g., due to liquidity)
such that $U \ge \max\{i \in \{1,\dots,n\}: |x_i|\}$, then the $M$ in~\eqref{eq:binaryprog} can be set to
$M > 2U\max\{|\xi^j_i|: i \in \{1,\dots,n\}, j \in \{1,\dots,m\}\}$.

With that said, in this paper, we studied the VaR portfolio selection problem, which is of high relevance in practice, and even in theory, thanks to development of the so-called {\em natural risk statistic} axioms \citep{nocoherent} and the introduction of the concept of {\em elicitability}~\citep[cf.,][]{Bellini13} to classify risk measures. To address the inherently difficult task of solving the VaR portfolio problem, here we propose a tandem of approximation algorithms to produce near-optimal solutions to the VaR portfolio problem. This is done by first using Algorithm~\ref{alg:Algorithm-1} to obtain a good feasible solution for the VaR portfolio problem, and as such, provide a lower bound for the optimal VaR associated with~\eqref{eq:binaryprog}. This algorithm is shown to outperform recent algorithms proposed for this purpose by~\citep{LarsMU02}, based on the iterative solution of appropriate CVaR portfolio problems. Then, in Algorithm~\ref{alg:Algorithm-2}, one aims to prove a \%1 optimallity guarantee for the feasible solution obtained at the end of Algorithm~\ref{alg:Algorithm-1}. The results obtained here, show that using both Algorithm~\ref{alg:Algorithm-1} and~\ref{alg:Algorithm-2} allows to more efficiently solve VaR portfolio problems with up to a hundred assets and thousands of samples, compared to solving the VaR portfolio problem directly with a MILP solver. This results clearly improve the recent results of~\citep{LarsMU02} on lower bounds for the VaR portfolio problem, and the recent results of~\citep[][Sec. 5]{FengWS15} on solving VaR portfolio problems for 25 assets without taking into account the total wealth constraint. Moreover, the proposed algorithms are funded on a study of the alternative formulations of the risk-reward portfolio allocation problem that extends the work done in this area recently by \citet[][Thm. 3]{Krok02}

Finally, we believe that the proposed algorithms can be also applied to solve the broader group of chance constrained optimization problems~\citep[cf.,][]{Sarykalin08}.

%\section*{References}

%\bibliographystyle{apalike}
%\bibliography{VaR_EJOR}

\begin{thebibliography}{}

\bibitem[Alexander et~al., 2006]{Coleman}
Alexander, A., Coleman, T.~F., and Lo, Y. (2006).
\newblock Minimizing {V}ar and {CV}a{R} for a portfolio of derivatives.
\newblock {\em Journal of Banking Finance}, 30(2):583--605.

\bibitem[Artzner et~al., 1999]{CoherentII}
Artzner, P., Delbaen, F., Eber, J.~M., and Heath, D. (1999).
\newblock Coherent measures of risk.
\newblock {\em Math. Finance}, 9:203--228.

\bibitem[Bazak and Shapiro, 2001]{Basak}
Bazak, S. and Shapiro, A. (2001).
\newblock {V}alue-at-{R}isk based risk management: Optimal policies and asset
  prices.
\newblock {\em Review of Financial Studies}, 14(2):371--405.

\bibitem[Bellini and Bignozzi, 2015]{Bellini13}
Bellini, F. and Bignozzi, V. (2015).
\newblock On elicitable risk measures.
\newblock {\em Quantitative Finance}, 15(5):725--733.

\bibitem[Benati and Rizzi, 2007]{Benati07}
Benati, S. and Rizzi, R. (2007).
\newblock A mixed integer linear programming formulation of the optimal
  mean/value-at-risk portfolio problem.
\newblock {\em European Journal of Operational Research}, 176(1):423--434.

\bibitem[Black and Litterman, 2001]{Black}
Black, F. and Litterman, R. (2001).
\newblock Global portfolio optimization.
\newblock {\em Financial Analysts Journal}, 48(5):28--43.

\bibitem[Campi and Calafiore, 2005]{CampC05}
Campi, M.~C. and Calafiore, G. (2005).
\newblock Uncertain convex programs: Randomized solutions and confidence
  levels.
\newblock {\em Math. Programming}, 102(1):25--46.

\bibitem[{\c{C}}etinkaya and Thiele, 2015]{Thiele14}
{\c{C}}etinkaya, E. and Thiele, A. (2015).
\newblock Data-driven portfolio management with quantile constraints.
\newblock {\em OR Spectrum}, 37(3):761--786.

\bibitem[Cont et~al., 2010]{Cont10}
Cont, R., Deguest, R., and Scandolo, G. (2010).
\newblock Robustness and sensitivity analysis of risk measurement procedures.
\newblock {\em Quantitative Finance}, 10(6):593--606.

\bibitem[Dan{\'\i}elsson, 2011]{Dani2011}
Dan{\'\i}elsson, J. (2011).
\newblock {\em Financial risk forecasting: the theory and practice of
  forecasting market risk with implementation in R and Matlab}, volume 588.
\newblock John Wiley \& Sons.

\bibitem[Darbha, 2001]{Darbha}
Darbha, G. (2001).
\newblock {V}alue-at-{R}isk for fixed income portfolios.
\newblock Technical report, National Stock Exchange, Mumbai, India.

\bibitem[De~Farias and Van~Roy, 2004]{Farias04}
De~Farias, D.~P. and Van~Roy, B. (2004).
\newblock On constraint sampling in the linear programming approach to
  approximate dynamic programming.
\newblock {\em Mathematics of Operations Research}, 29(3):462--478.

\bibitem[{El Ghaoui} et~al., 2003]{Ghaoui}
{El Ghaoui}, L., Oks, M., and Oustry, F. (2003).
\newblock Worst-case {V}alue-at-{R}isk and robust portfolio optimization: A
  conic programming approach.
\newblock {\em Operations Research}, 51(4):543--556.

\bibitem[Erdogan and Iyengar, 2006]{Erdo06}
Erdogan, E. and Iyengar, G. (2006).
\newblock Ambiguous chance constrained problems and robust optimization.
\newblock {\em Math. Programming Ser. B}, 107(1--2):37--61.

\bibitem[Feng et~al., 2015]{FengWS15}
Feng, M., W{\"a}chter, A., and Staum, J. (2015).
\newblock Practical algorithms for {V}alue-at-{R}isk portfolio optimization
  problems.
\newblock {\em Quantitative Finance Letters}, 3(1):1--9.

\bibitem[Gaivoronski and Pflug, 2004]{Alexei}
Gaivoronski, A.~A. and Pflug, G. (2004).
\newblock {V}alue-at-{R}isk in portfolio optimization: Properties and
  computational approach.
\newblock {\em Journal of Risk}, 7(2):1--31.

\bibitem[Glasserman et~al., 2000]{Glass}
Glasserman, P., Heidelberger, P., and Shahabuddin, P. (2000).
\newblock Variance reduction techniques for estimating {V}alue-at-{R}isk.
\newblock {\em Management Science}, 46(10):1349--1364.

\bibitem[Gneiting, 2011a]{Gnei11}
Gneiting, T. (2011a).
\newblock Making and evaluating point forecast.
\newblock {\em Journal of the American Statistical Association},
  106(494):746--762.

\bibitem[Gneiting, 2011b]{Gneiting11}
Gneiting, T. (2011b).
\newblock Making and evaluating point forecasts.
\newblock {\em Journal of the American Statistical Association},
  106(494):746--762.

\bibitem[Heyde et~al., 2006]{nocoherent}
Heyde, C.~C., Kou, S.~G., and Peng, X.~H. (2006).
\newblock What is a good risk measure: Bridging the gaps between data, coherent
  risk measures, and insurance risk measures.
\newblock Technical report, Columbia University.

\bibitem[Huber and Ronchetti, 2009]{ROStats}
Huber, P.~J. and Ronchetti, E.~M. (2009).
\newblock {\em Robust Statistics}.
\newblock Wiley Series in Probability and Statistics. Wiley.

\bibitem[Jorion, 2001]{VaRbook}
Jorion, P. (2001).
\newblock {\em {V}alue at {R}isk: The New Benchmark for managing financial
  risk}.
\newblock McGraw--Hill, New York.

\bibitem[Kaplanski and Koll, 2002]{Koll}
Kaplanski, G. and Koll, Y. (2002).
\newblock Var risk measures vs traditional risk measures: An analysis and
  survey.
\newblock {\em Journal of Risk}, 4(3):43--68.

\bibitem[Krokhmal et~al., 2002]{Krok02}
Krokhmal, P., Palmquist, J., and Uryasev, S. (2002).
\newblock Portfolio optimization with conditional value-at-risk objective and
  constraints.
\newblock {\em Journal of Risk}, 4:11--27.

\bibitem[Larsen et~al., 2002]{LarsMU02}
Larsen, N., Mausser, H., and Uryasev, S. (2002).
\newblock Algorithms for optimization of {V}alue-at-{R}isk.
\newblock In Pardalos, P. and Tsitsiringos, V.~K., editors, {\em Financial
  {E}ngineering, e-{C}ommerce and {S}upply {C}hain}, pages 129--157. Kluwer
  Academic Publishers.

\bibitem[Lin, 2009]{Lin09}
Lin, C.-C. (2009).
\newblock Comments on ``{A} mixed integer linear programming formulation of the
  optimal mean/value-at-risk portfolio problem".
\newblock {\em European Journal of Operational Research}, 194(1):339--341.

\bibitem[Lobo, 2000]{Lobo}
Lobo, M.~S. (2000).
\newblock {\em Robust and convex optimization with applications to finance}.
\newblock PhD thesis, Department of Electrical Engineering, Stanford
  University.

\bibitem[Markowitz, 1952]{Markowitz}
Markowitz, H.~M. (1952).
\newblock Portfolio selection.
\newblock {\em Journal of Finance}, 7:77--91.

\bibitem[Mehrotra and Trick, 2007]{Trick07}
Mehrotra, A. and Trick, M.~A. (2007).
\newblock A branch-and-price approach for graph multi-coloring.
\newblock {\em Extending the Horizons: Advances in Computing, Optimization, and
  Decision Technologies Operations Research/Computer Science Interfaces
  Series}, 37:15--29.

\bibitem[Meucci, 2009]{Meuc07}
Meucci, A. (2009).
\newblock {\em Risk and asset allocation}.
\newblock Springer Science \& Business Media.

\bibitem[Michaud, 1998]{Michaud}
Michaud, R.~O. (1998).
\newblock {\em Efficient Asset Management: A Practical Guide to Stock Portfolio
  Optimization and Asset Allocation}.
\newblock Oxford University Press.

\bibitem[Natarajan et~al., 2009]{SimPN09}
Natarajan, K., Pachamanova, D., and Sim, M. (2009).
\newblock Constructing risk measures from uncertainty sets.
\newblock {\em Operations Research}, 57(5):1129--1141.

\bibitem[Rockafellar and Uryasev, 2000]{Uryasev00}
Rockafellar, R.~T. and Uryasev, S. (2000).
\newblock Optimization of {C}onditional {V}alue-at-{R}isk.
\newblock {\em Journal of Risk}, 2:21--41.

\bibitem[Rockafellar et~al., 2004]{Coherent}
Rockafellar, T.~R., Uryasev, S.~P., and Zabarankin, M. (2004).
\newblock Generalized deviations in risk analysis.
\newblock Technical Report 2004-4, University of Florida.

\bibitem[Sarykalin et~al., 2008]{Sarykalin08}
Sarykalin, S., Serraino, G., and Uryasev, S. (2008).
\newblock Value-at-{R}isk vs. {C}onditional {V}alue-at-{R}isk in risk
  management and optimization.
\newblock {\em Tutorials in Operations Research. INFORMS, Hanover, MD}.

\bibitem[Verma and Coleman, 2005]{Verma}
Verma, A. and Coleman, T. (2005).
\newblock Global optimization of {V}a{R} ({V}alue-at-{R}isk) portfolios using
  continuation methods.
\newblock In {\em Workshop on Optimization in Finance}. Center for
  International Mathematics, School of Economics, University of Coimbra,
  Portugal.

\bibitem[Wozabal, 2012]{Wozabal12}
Wozabal, D. (2012).
\newblock Value-at-{R}isk optimization using the difference of convex
  algorithm.
\newblock {\em OR spectrum}, 34(4):861--883.

\bibitem[Wozabal et~al., 2010]{Wozabal10}
Wozabal, D., Hochreiter, R., and Pflug, G.~C. (2010).
\newblock A difference of convex formulation of {V}alue-at-{R}isk constrained
  optimization.
\newblock {\em Optimization}, 59(3):377--400.

\end{thebibliography}

\end{document}